\PassOptionsToPackage{final}{graphicx}
\PassOptionsToPackage{nosumlimits,nonamelimits}{amsmath}
\PassOptionsToPackage{capitalize,noabbrev,nameinlink}{cleveref}
\PassOptionsToPackage{colorlinks,linkcolor={blue},citecolor={blue},urlcolor={blue},breaklinks=true,final}{hyperref}
\documentclass[
a4paper,
UKenglish,
cleveref,
numberwithinsect,
nolipics
]{mylipics-v2021}

\nolinenumbers
\hideLIPIcs
\title{Non-Expansive Fuzzy Modal Logic\\ Has Uniform Interpolation}
\titlerunning{Non-Expansive Fuzzy Modal Logic Has Uniform Interpolation}
 \author{Stefan Gebhart}{Friedrich-Alexander-University Erlangen-N\"urnberg, Germany}{stefan.gebhart@fau.de}{}{acknowledges support by the Deutsche Forschungsgemeinschaft (DFG, German Research Foundation) -- project number 531706730}

 \author{Lutz Schr\"oder}{Friedrich-Alexander-University Erlangen-N\"urnberg, Germany}{}{}{}
 \author{Paul Wild}{Friedrich-Alexander-University Erlangen-N\"urnberg, Germany}{}{}{}

 \authorrunning{S. Gebhart, L. Schr\"oder and P. Wild}

 \Copyright{Stefan Gebhart, Lutz Schr\"oder and Paul Wild}

\usepackage{amsmath}
\usepackage{amssymb}
\usepackage{amsthm}
\usepackage{xspace}
\usepackage{stmaryrd}

\usepackage{mathtools}
\usepackage{enumitem}
\usepackage{booktabs}

\usepackage{mathpartir} 

\usepackage[layout=footnote,marginclue,final]{fixme}
\FXRegisterAuthor{sg}{asg}{SG}	
\FXRegisterAuthor{ls}{als}{LS}	
\FXRegisterAuthor{pw}{apw}{PW}	

\theoremstyle{definition}
\newtheorem{defn}[theorem]{Definition}
\crefname{defn}{definition}{definitions}
\Crefname{defn}{Definition}{Definitions}
\newtheorem{convention}[theorem]{Convention}

\newcommand{\TwoEXP}{\textsc{2ExpTime}\xspace}
\newcommand{\EXP}{\textsc{ExpTime}\xspace}
\newcommand{\NEXP}{\textsc{NExpTime}\xspace}
\newcommand{\PSPACE}{\textsc{PSpace}\xspace}

\newcommand{\ALC}{\mathcal{ALC}}

\newcommand{\Sem}[1]{\llbracket #1\rrbracket}

\newcommand{\ineqOp}{\bowtie}
\newcommand{\ineqOpSymbol}{\mathord{\ineqOp}}
\newcommand{\ineqOpSet}{\{\geq, >, \leq, <\}}

\newcommand{\gineqOp}{\triangleright}
\newcommand{\gineqOpSymbol}{\mathord{\gineqOp}}
\newcommand{\gineqOpSet}{\{\geq, >\}}

\newcommand{\lineqOp}{\triangleleft}
\newcommand{\lineqOpSymbol}{\mathord{\lineqOp}}
\newcommand{\lineqOpSet}{\{\leq, <\}}

\newcommand{\covcompDef}[2]{\mathord{#1}\mathbin{\star}\mathord{#2}}
\newcommand{\covcomp}[2]{\mathrel{(\covcompDef{#1}{#2})}}

\newcommand{\unitInterval}{[0,1]}

\newcommand{\strictnessOp}[1]{\mathrel{\overline{\mathord{#1}}}}
\newcommand{\directionOp}[1]{\mathrel{\mathord{#1}^{-1}}}
\newcommand{\complementOp}[1]{\mathrel{\mathord{#1}^{c}}}

\newcommand{\voc}{\operatorname{\mathsf{Voc}}}

\newcommand{\atoms}{\mathsf{At}}

\newcommand{\lit}[1]{\mathsf{Lit}_{\neq #1}}

\newcommand{\elim}[1]{\operatorname{\mathsf{Elim}}_{#1}}

\newcommand{\normassertiondefault}{N}
\newcommand{\normassertion}[2]{\normassertiondefault_{(\mathord{#1},#2)}}

\newcommand{\boxsequents}[1]{#1^\Box}
\newcommand{\diamondsequents}[1]{#1^\Diamond}

\newcommand{\boxsequentsresnormalform}[2]{\operatorname{B}_{#2}^{#1}}
\newcommand{\diamondsequentsresnormalform}[2]{\operatorname{D}_{#2}^{#1}}

\newcommand{\interpolationCalculus}{interpolation calculus\xspace}
\newcommand{\InterpolationCalculus}{Interpolation calculus\xspace}

\newcommand{\sequenttreename}{computation tree\xspace}

\newcommand{\sequentlabelname}{$p$-configuration\xspace}
\newcommand{\sequentlabelnameplural}{$p$-configurations\xspace}

\newcommand{\ruleSize}{\normalsize}

\newcommand{\symbolicSequent}{strict sequent}

\newcommand{\clip}[1]{\left[#1\right]_0^1}
\newcommand{\shiftformula}[2]{S_{#1}\!\left(#2\right)}

\newcommand{\boxsequentsresnormalformsymbolic}[2]{\operatorname{Box}_{#2}^{#1}}
\newcommand{\diamondsequentsresnormalformsymbolic}[2]{\operatorname{Dia}_{#2}^{#1}}

\begin{document}

\begin{titlepage}
  \maketitle

  \begin{abstract}
    Fuzzy logics with truth values in the unit interval vary in their
    choice of propositional connectives and their semantics, ranging
    from just minimum and maximum as in Zadeh logics to truncated
    addition and subtraction as in \L{}ukasiewicz logics or their
    extensions with truth constants, known as Pavelka logics. In the
    more specific world of fuzzy modal logics, non-expansive fuzzy modal
    logic has recently been proposed as a sweet spot between
    \L{}ukasiewicz (more precisely, Pavelka) and Zadeh fuzzy modal
    logic; it restricts \L{}ukasiewicz propositional connectives to have
    one argument constant, equivalently extends the Zadeh propositional
    base with constant shift operators. It retains reasonable
    expressiveness both in practical terms and abstractly; in
    particular, it characterizes the standard notion of behavioural
    distance on fuzzy Kripke frames. At the same time, it allows for
    satisfiability checking in the same complexity as for its two-valued
    correspondent (the modal logic~$K$, or the description
    logic~$\ALC$), in contrast to the notorious computational hardness
    of \L{}ukasiewicz modal logic. In the present work, we show that
    non-expansive fuzzy modal logic admits uniform interpolation, a key
    property in modularization, information hiding, summarization, and
    other applications that fails for both \L{}ukasiewicz (or Pavelka)
    and Zadeh modal logic.
  \end{abstract}
\end{titlepage}
\newpage
\section{Introduction}

\emph{Fuzzy description logics} (see~\cite{LukasiewiczStraccia08} for
an overview) serve the specification of vague knowledge, expressing
both vague properties of individuals such as tallness and vague
relations such as fandom. Fuzzy description logics are equivalent to
the more neutrally termed \emph{fuzzy modal logics}, which may more
generally be seen as logics for weighted transition systems, and as
such have found application, for instance, in software model
checking~\cite{BrunsGodefroid04,KupfermanLustig07}. The semantics of
fuzzy modal logics varies widely even when truth values are fixed to
the unit interval, in particular regarding the interpretation of
propositional connectives~\cite{LukasiewiczStraccia08}, with
substantial impact on logical properties and the complexity of
reasoning. At one end of the spectrum, Zadeh semantics, which
essentially features only minimum, maximum, and fuzzy negation,
ensures tractability even of very expressive description
logics~\cite{Straccia01,StoilosEA07,StoilosEA14} but on the other hand
is semantically almost indistinguishable from its classical
counterpart (e.g.~for $c\ge 0.5$, a concept in Zadeh fuzzy~$\ALC$ is
satisfiable with degree at least~$c$ iff the same concept is
satisfiable in
classical~$\ALC$)~\cite{BonattiTettamanzi03,KellerHeymans09,Straccia01}. At
the other end, \L{}ukasiewicz semantics, which additionally makes use
of the additive structure of the unit interval, is characterized by
desirable logical properties~\cite{KunduChen98} but causes high
complexity of the main reasoning problems. For instance, the best
known upper bound for satisfiability checking in \L{}ukasiewicz fuzzy
$\ALC$ is \NEXP~\cite{Straccia05,KulackaEA13}, and in the presence of
global assumptions (i.e.~general TBoxes, in description logic
terminology), satisfiability checking in fact becomes
undecidable~\cite{BaaderPenaloza11}.

As an intermediate logic aiming to strike a balance between
expressiveness and computational tractability, \emph{non-expansive
  fuzzy modal logic} or \emph{non-expansive fuzzy $\ALC$} has been
proposed~\cite{WildEA18,GebhartEA25}. Here, the term
\emph{non-expansive} refers to the fact that formula evaluation is
non-expansive w.r.t.~the natural behavioural distance on fuzzy
transition systems. In fact, non-expansive fuzzy modal logic is
expressive for this distance in two senses: It satisfies both a
Hennessy-Milner-type theorem saying that logical distance coincides
with behavioural distance on finitely-branching systems, and a
van-Benthem-type theorem saying that, up to approximation in bounded
rank, fuzzy modal logic is precisely the non-expansive fragment of its
natural fuzzy first-order correspondence
language~\cite{WildEA18}. Syntactically, the logic lies between Zadeh
fuzzy modal logic and the extension of \L{}ukasiewicz fuzzy modal
logic with (rational) truth constants, a logic known as (rational)
Pavelka logic~\cite{Hajek95,Pavelka79}. Specifically, what is termed
the \emph{non-expansive propositional base} restricts \L{}ukasiewicz
connectives to have a truth constant as one argument or, equivalently,
extends the Zadeh base with constant shift operators (this
propositional base was first used in quantitative modal logics
for probabilistic transition systems~\cite{BreugelWorrell05}). For
instance, this allows expressing dampening effects in the inheritance
of properties, e.g.~in axioms saying that friends of football fanatics
are also football fanatics, maybe to a slightly lesser
degree~\cite{GebhartEA25}. While the expressiveness of non-expansive
fuzzy modal logic thus markedly exceeds that of Zadeh modal logic, the
complexity of the main reasoning problems is not increased,
i.e.~remains $\PSPACE$ for plain satisfiability and $\EXP$ for
satisfiability under global assumptions~\cite{GebhartEA25} (the same
as for classical $\ALC$).

In the present work, we establish \emph{uniform interpolation} for
non-expansive fuzzy modal logic, to our knowledge the first
interpolation result in fuzzy modal logic. Generally, a uniform
(post-)interpolant of a formula w.r.t.~a propositional atom~$p$ is,
roughly speaking, a strongest $p$-free consequence of the original
formula, i.e.~captures the content of the original formula in the
reduced vocabulary obtained by removing~$p$. Existence of uniform
interpolants implies \emph{Craig interpolation}, which says that every
valid implication is mediated by a formula mentioning only the shared
vocabulary of the two formulae.  While interpolation is generally a
subtle issue and not stable under extension or restriction of logics,
it is known that the basic two-valued modal logic~$\mathbf{K}$ has
uniform interpolation~\cite{Visser96}. Applications of Craig
interpolation and uniform interpolation in classical modal and
description logics include summarization, analysis, modularization,
and reuse of knowledge
bases~\cite{KonevEA09,LutzWolter11,KoopmannSchmidt15}, information
hiding~\cite{Grau10}, knowledge exchange between agents
(e.g.~\cite{ToluhiEA21}), the computation of separating formulae in
supervised concept learning~\cite{JungEA21}, and the computation of
explicit definitions~\cite{LyonKarge24}.

As the central step towards our main result, we establish a uniform
interpolation result for \emph{threshold judgements} stating that the
truth value of a given formula is at least a given
constant. Technically, we base the computation of uniform interpolants
in this sense on a sequent calculus for threshold judgements that
dualizes a previous tableau calculus~\cite{GebhartEA25}. We obtain an
exponential bound on the size of uniform interpolants, the same as in
the classical case~\cite{TenCateEA06,TenCateEA26}. By introducing a
symbolic treatment of thresholds, we subsequently extend the
construction to obtain uniform interpolation of formulae in the
expected sense (that is, w.r.t.~the pointwise order on truth
functions), with the same exponential size bound.  We note that this is
in sharp contrast to both Zadeh semantics and full \L{}ukasiewicz
semantics, under which even basic propositional logic fails to admit
even Craig interpolation~\cite{BaazVeith99} (propositional
\L{}ukasiewicz logic does have the weaker property of deductive
interpolation~\cite{DiNolaLettieri00}); we show in the appendix that
this extends to Pavelka logic.

\subparagraph*{Organization} Basic concepts of the syntax and
semantics of fuzzy (propositional) logics are discussed in
\cref{sec:prelim}. We recall the definition of non-expansive fuzzy
modal logic in \cref{sec:fuzzy-ml}. The sequent calculus, discussed in
\cref{sec:sequent}, relies on several normalization procedures,
introduced in \cref{normalforms}. The algorithm for the computation of
uniform interpolants of judgements is presented in
\cref{sec:uniform-interpolation}, and the main result on uniform
interpolation of formulae is concluded in
\cref{sec:pointwise-interpolation}.

\subparagraph*{Related work} The proof-theoretic approach to computing
uniform interpolants goes back to work on intuitionistic propositional
logic~\cite{Pitts92} and has previously been adapted to two-valued
modal logics~\cite{Bilkova07}. Forthcoming surveys are available of
work on interpolation and its applications in non-classical
logics~\cite{Fussner26} and two-valued description
logics~\cite{JungEA26}. We note that uniform interpolation of TBoxes
in two-valued $\ALC$ is conceptually and computationally harder than
interpolation of concepts (or formulae)~\cite{LutzWolter11}: A given
TBox may fail to have a uniform interpolant; deciding whether a TBox
has a uniform interpolant is \TwoEXP-complete; and the size of uniform
interpolants may be triply exponential. While it is known that the
modal logics $\mathbf{K}$, $\mathbf{T}$, $\mathbf{GL}$ (Gödel-Löb's
logic of provability), and Grzegorczyk's $\mathbf{S4Grz}$ have uniform
interpolation~\cite{Visser96,Bilkova07}, uniform interpolation fails
for $\mathbf{S4}$~\cite{GhilardiZawadowski95}
and~$\mathbf{K4}$~\cite{Bilkova07}. Recent work establishes uniform
interpolation for the constructive modal logics~$\mathbf{CK}$ and
$\mathbf{WK}$~\cite{vanderGiessenShillito26} and on the other hand
failure of even Craig interpolation for Fischer-Servi intuitionistic
modal logic~$\mathbf{IK}$~\cite{AlmeidaEA26}. Craig interpolation in
extensions of Belnap-Dunn bimodal logic (a paraconsistent modal logic
based on a four-element lattice of truth values) has been shown to
reduce to Craig interpolation in extensions of two-valued bimodal
logic~\cite{Speranski22}. As indicated above, no positive results
appear to be known on interpolation in fuzzy modal
logics. Interpolation in the sense used here is unrelated to
\emph{fuzzy rule interpolation}~\cite{KoczyHirota93}, in which the
term `interpolation' refers to standard interpolation between real
numbers.

\section{Preliminaries on Fuzzy Logic and
  Interpolation}\label{sec:prelim}

We briefly recall basic definitions and facts in propositional fuzzy
logics. For our present purposes, we understand fuzzy logics as taking
truth values in the unit interval, which in particular is known to be
complete for \L{}ukasiewicz logic (in the sense that any formula that
is valid in the unit interval is valid in all so-called
MV-algebras)~\cite{Chang59}. We equip the unit interval with
\emph{truncated addition}~$\oplus$ and \emph{truncated
  subtraction}~$\ominus$, defined by $x\oplus y=\min(x+y,1)$ and
$x\ominus y=\max(x-y,0)$.

As indicated in the introduction, fuzzy logics differ in their
interpretation of the propositional connectives. For purposes of the
discussion of propositional logics, we work with varying syntax that
includes in particular a set~$V$ of \emph{propositional atoms}; the
\emph{vocabulary} of a formula~$\phi$ is then the set of propositional
atoms that occur in~$\phi$. The propositional semantics is based on
valuations $\kappa\colon V\to[0,1]$, and assigns to each propositional
formula~$\phi$ built from~$V$ using various propositional connectives
and each valuation~$\kappa$ a truth value
$\Sem{\phi}\kappa\in[0,1]$. A formula~$\phi$ is \emph{valid} if
$\Sem{\phi}\kappa=1$ for all~$\kappa$. In \emph{Zadeh logic},
conjunction~$\land$ and disjunction~$\lor$ are just interpreted as
minimum and maximum,
i.e.~$\Sem{\phi\land\psi}\kappa=\min(\Sem{\phi}\kappa,\Sem{\psi}\kappa)$
and
$\Sem{\phi\lor\psi}\kappa=\max(\Sem{\phi}\kappa,\Sem{\psi}\kappa)$,
while negation~$\neg$ is interpreted by
$\Sem{\neg\phi}\kappa=1-\Sem{\phi}\kappa$, an operation known briefly as
\emph{fuzzy negation}. In the sequel, we often just write
$x\vee y=\max(x,y)$ and $x\wedge y=\min(x,y)$ for $x,y\in[0,1]$, and
similarly $\bigvee_{i\in I} x_i=\sup_{i\in I}x_i$,
$\bigwedge_{i\in I}x_i=\inf_{i\in I}x_i$ when $x_i\in[0,1]$ for all
$i\in I$. \emph{\L{}ukasiewicz logic} additionally features an
additive disjunction operator~$\oplus$ interpreted as truncated
addition,
$\Sem{\phi\oplus\psi}\kappa=\Sem{\phi}\kappa\oplus\Sem{\psi}\kappa$. The
dual operator~$\otimes$, given by
$\phi\otimes\psi\equiv\neg(\neg\phi\oplus\neg\psi)$, is known as
\emph{strong conjunction} or \emph{\L{}ukasiewicz conjunction}, and
explicitly given by
$\Sem{\phi\otimes\psi}\kappa=\max(\Sem{\phi}\kappa+\Sem{\psi}\kappa-1,0)$. (In
fact, $\land$ and $\lor$ are definable in terms of~$\oplus$
and~$\neg$.) The extension of \L{}ukasiewicz logic with (rational)
truth constants is known as \emph{(rational) Pavelka
  logic}~\cite{Pavelka79}.

Fuzzy logics can be equipped with different notions of logical
consequence. One frequently considered notion is to take~$\psi$ to be
a logical consequence of~$\phi$ if whenever~$\phi$ takes the truth
value~$1$, then so does~$\psi$. For distinction, we then say
that~$\psi$ is a \emph{weak consequence} of~$\phi$. This is the notion
of logical consequence that is meant in the mentioned statement that
propositional \L{}ukasiewicz logic has the \emph{deductive
  interpolation property}~\cite{DiNolaLettieri00}, which says that
every logical consequence between two formulae can be mediated by a
formula mentioning only the shared vocabulary. (In
\cref{sec:uniform-interpolation}, we establish a generalized form of
deductive uniform interpolation as an intermediate result.)
Contrastingly, the standard definition of \emph{Craig
  interpolation} says that every valid \emph{implication} between two
formulae can be mediated by a formula mentioning only the shared
vocabulary, a stronger property than deductive
interpolation~\cite{Fussner26}. In \L{}ukasiewicz logic, with
implication~$\phi\to\psi$ defined as $\neg\phi\oplus\psi$, the
implication $\phi\to\psi$ is valid iff
$\Sem{\phi}\kappa\le\Sem{\psi}\kappa$ for all~$\kappa$. To apply this
notion to logics that do not have \L{}ukasiewicz implication, we
denote the latter property (and generalizations thereof to more
expressive logics) as $\models\phi\leq\psi$, and refer to this form of
logical consequence as \emph{strong consequence}. As a simple
separating example, in \L{}ukasiewicz logic, $\phi\otimes\phi$ is a
weak consequence of~$\phi$ but not a strong consequence. From now on,
we refer to interpolation w.r.t.~strong consequence, i.e.~the property
that every strong consequence between two formulae is mediated by a
formula mentioning only the shared vocabulary, as \emph{Craig
  interpolation}. It is well-known that both \L{}ukasiewicz
logic~\cite{BaazVeith99} and Zadeh logic fail to have Craig
interpolation in this sense, the obvious counterexample being the
strong consequence $\models p\land\neg p\le q\lor\neg q$. Indeed, a
putative interpolant would not mention any propositional atoms, and
hence would need to have constant value $0.5$, which is clearly not
expressible in either Zadeh or \L{}ukasiewicz logic. We show in the
appendix that the failure of Craig interpolation extends to Pavelka
logic, with a slightly more complicated
counterexample. Contrastingly, our main result
(\cref{theorem:pointwise-interpolation}) establishes Craig and even
uniform interpolation w.r.t.~strong consequence for non-expansive
fuzzy modal logic.

\section{Non-Expansive Fuzzy Modal Logic}\label{sec:fuzzy-ml}
We recall the syntax and semantics of \emph{non-expansive fuzzy modal
  logic}~\cite{WildEA18} (whose multi-modal version has been termed
\emph{non-expansive fuzzy $\ALC$}~\cite{GebhartEA25}; we restrict to
the uni-modal case purely in the interest of readability), and fix
notation used in sequents; a corresponding sequent calculus is
discussed in \cref{sec:sequent}. The logic is characterized by
employing a \emph{non-expansive propositional base} that, in the
terminology recalled in \cref{sec:prelim}, lies between the Zadeh base
and the Pavelka base.

\subparagraph*{Syntax} We fix a set~$\atoms$ of \emph{(propositional)
  atoms}. The \emph{formulae} $\phi,\psi,\dots$ of \emph{non-expansive
  fuzzy modal logic} are then generated by the grammar
\begin{equation}\label{eq:grammar}
  \phi,\psi \Coloneqq p \mid 0 \mid \lnot \phi \mid \phi \ominus c \mid \phi\land\psi \mid \Diamond\phi \qquad (p\in\atoms,c\in\unitInterval).
\end{equation}
As formally defined later,~$\neg$ and~$\land$ respectively denote
fuzzy negation and (Zadeh) fuzzy conjunction, while $\ominus$ denotes
truncated subtraction. We include truth constants~$c\in[0,1]$ (for
some purposes, one needs to restrict to rational~$c$,
cf.~\cref{remark:numerical-parameters-size}), following Pavelka logic
(\cref{sec:prelim}). We define disjunction~$\lor$ and the fuzzy box
modality~$\Box$ by dualization as usual
($\phi \lor \psi = \lnot (\lnot \phi \land \lnot \psi)$,
$\Box \phi = \lnot \Diamond \lnot \phi$). Moreover, we can define
truncated addition of constants, $(-)\oplus c$, and constants~$c$ as
$\phi \oplus c = \lnot (\lnot \phi \ominus c)$ and
$c = (\lnot 0) \ominus (1-c)$, respectively. As usual, a
\emph{literal} is either an atom~$p$ or the negation~$\neg p$ of an
atom. The \emph{syntactic size} of a formula
is 
just the number of nodes in its syntax tree; in particular, each
constant contributes only size~$1$. (A more fine-grained size measure
taking into account the binary representation size of constants, now restricted to rationals, will not change our asymptotic size bounds; roughly speaking, numbers are subjected only to addition, so the dependence on binary representation size is only linear. See \Cref{remark:numerical-parameters-size} for more details.)
  
\subparagraph{Semantics} A \emph{fuzzy Kripke model} is a
triple~$\mathcal{M} =
\bigl(W_{\mathcal{M}},R_{\mathcal{M}},V_{\mathcal{M}}\bigr)$, where
$W_{\mathcal{M}}\neq\emptyset$ is a set of worlds,
$R_{\mathcal{M}}\colon W_{\mathcal{M}}\times
W_{\mathcal{M}}\to\unitInterval$ is a fuzzy accessibility relation,
and
$V_{\mathcal{M}}\colon\atoms\times W_{\mathcal{M}}\to\unitInterval$ is
a valuation. We will regard the designators
$W_{\mathcal{M}},R_{\mathcal{M}},V_{\mathcal{M}}$ as being implicitly
determined by~$\mathcal M$ throughout. The valuation extends
recursively to all formulae by
\begin{gather*}
  \begin{aligned}
    V_{\mathcal{M}}(0,w)
    &= 0
    &\qquad
      V_{\mathcal{M}}(\phi\ominus c,w)
    &= \max\{V_{\mathcal{M}}(\phi,w)-c,0\}
    \\
    V_{\mathcal{M}}(\lnot \phi,w)
    &= 1-V_{\mathcal{M}}(\phi,w)
    &\qquad
      V_{\mathcal{M}}(\phi\land\psi,w)
    &=
      \min\{V_{\mathcal{M}}(\phi,w),V_{\mathcal{M}}(\psi,w)\}
  \end{aligned}
  \\
  V_{\mathcal{M}}(\Diamond\phi,w)
  =\textstyle
  \sup_{v\in W_{\mathcal{M}}}
  \min\{R_{\mathcal{M}}(w,v),V_{\mathcal{M}}(\phi,v)\}.
\end{gather*}
We thus interpret conjunction~$\land$ as minimum (and hence
disjunction~$\vee$ as maximum) as recalled in \cref{sec:prelim},
while~$\ominus$ and~$\oplus$ allow limited use of the additive
structure of the unit interval, which underlies \L{}ukasiewicz
semantics. In particular, the operation $(-)\oplus~c$ is
\L{}ukasiewicz disjunction with the constant~$c$; the non-expansive
propositional base, consisting of all connectives in
\eqref{eq:grammar} except the modality, may thus be alternatively
described either as adding constant shift operators $(-)\ominus c$ to
the Zadeh base, or as restricting \L{}ukasiewicz disjunction and
conjunction~$\oplus$,~$\otimes$ to have one argument constant. The
fuzzy diamond $\Diamond$ is interpreted in analogy to the classical
two-valued modal diamond, with transition weight
$R_{\mathcal{M}}(w,v)$ and the truth value of the argument
formula~$\phi$ being combined by minimum, in agreement with the
interpretation of conjunction~$\land$. Unfolding the definition
of~$\Box$, we obtain
\begin{equation}\label{eq:box-sem}
  V_{\mathcal{M}}(\Box\phi,w)
  =\textstyle
  \inf_{v\in W_{\mathcal{M}}}
  \max\{1-R_{\mathcal{M}}(w,v),V_{\mathcal{M}}(\phi,v)\}.
\end{equation}

\subparagraph*{Sequents} We will work with a sequent calculus
(\cref{sec:sequent}) whose sequents consist of threshold judgements,
or just `judgements', comparing the truth values of formulae with
constants. It will be convenient to have notation abstracting over
comparison operators:

\begin{convention}
  Throughout, we use
  $\ineqOpSymbol \in \ineqOpSet, \lineqOpSymbol \in \lineqOpSet$, and
  $\gineqOpSymbol \in \gineqOpSet$ as placeholders for comparison
  operators. As expected, the operators $>,<$ are \emph{strict} and
  the operators $\ge,\le$ are \emph{non-strict}; we refer to this
  property as the \emph{strictness} of a comparison
  operator. Similarly, the \emph{direction} of a comparison operator is
  either left-to-right ($<,\le$) or right-to-left ($>,\ge$)
\end{convention}

\begin{defn}
  A \emph{(threshold) judgement} is an inequality~$\phi \ineqOp c$, where $\phi$
  is a formula, $\ineqOpSymbol \in\ineqOpSet$ is a comparison
  operator, and $c\in\mathbb{R}$ (we explicitly do not restrict
  to~$c\in[0,1]$). A \emph{sequent} is a finite set of judgements.
  The size of a judgement $\phi \ineqOp c$ is
  $\lvert \phi \rvert + 2$, and the size of a sequent is the sum of
  the sizes of its judgements. \emph{Satisfaction} $\models$ of
  judgements $\phi \ineqOp c$ and sequents~$\Gamma$ in a world
  $w\in W_{\mathcal{M}}$ of a fuzzy Kripke model~$\mathcal{M}$ is
  defined by
  \begin{align*}
    \mathcal{M},w\models\phi \ineqOp c & \iff\quad V_{\mathcal{M}}(\phi,w) \ineqOp c \\
    \mathcal{M},w\models\Gamma & \iff\quad \mathcal{M},w\models A \text{ for some judgement }A\in\Gamma.
  \end{align*}
  A judgement or sequent~$\Gamma$ is \emph{valid}, written
  $\models\Gamma$, if $\mathcal{M},w\models\Gamma$ for every fuzzy
  Kripke model~$\mathcal{M}$ and every world~$w\in
  W_{\mathcal{M}}$. Moreover,~$\Gamma$ is \emph{satisfiable} if there
  exist a fuzzy Kripke model~$\mathcal{M}$ and a
  world~$w\in W_{\mathcal{M}}$ such that $\mathcal{M},w\models\Gamma$.
\end{defn}

\begin{example}
  We give an example of a formula in non-expansive fuzzy modal
  logic, extending a similar formula in \cite[Example~2.2]{WildEA18} so
  as to illustrate the use of constant shifts:
	We take transition degrees in fuzzy transition systems as our accessibility interpretation. Then, the formula
	\begin{equation*}
	\Diamond\bigl((\Box 0)\oplus\tfrac15\bigr)
	\end{equation*}
	measures the degree to which a nearly deadlocked state can be
	reached in one step. In particular, $(\Box 0)(y)=1-\sup_z R(y,z)$
	is the deadlock degree of~$y$. The shift means that a state is treated as deadlocked to a higher degree than its successorship degrees ordinarily suggest, i.e. we discount the successorship values by up to~$1/5$.
\end{example}

\section{Normal Forms}\label{normalforms}
 We introduce two normalizations used in the
sequent calculus (\cref{sec:sequent}) and in the computation of
uniform interpolants (\cref{sec:uniform-interpolation}). First, we
push negations inward to obtain negation normal form, with only a
linear increase in syntactic size. We then introduce adjustments of
judgements that allow us, under suitable conditions, to prescribe
their comparison operators and numerical bounds.

For formulae~$\phi,\psi$, we write $\phi \equiv \psi$ if $\phi$ and
$\psi$ have the same truth value at every world of every model. For
judgements and sequents, $\equiv$ denotes equivalence of satisfaction
at every world of every model.
	
\begin{defn}
  A formula~$\phi$ is in \emph{negation normal form} if negations
  occur only immediately before atoms in~$\phi$. Here, the derived
  connectives $\lor$, $\oplus$, and $\Box$, as well as the truth
  constants become first-class citizens rather than
  abbreviations. That is, formulae in negation normal form are
  generated by
  \begin{equation*}
    \phi,\psi ::=
    c \mid p \mid \lnot p
    \mid \phi \land \psi
    \mid \phi \lor \psi
    \mid \phi \ominus c
    \mid \phi \oplus c
    \mid \Diamond\phi
    \mid \Box\phi, \qquad (p\in\atoms,c\in\unitInterval).
  \end{equation*}
\end{defn}
A \emph{modalized formula} is a formula of the form either
$\Diamond\phi$ or~$\Box\phi$.  The following equivalences allow us to
push negations inward:
	\begin{align*}
		\lnot\lnot\phi
		&\equiv \phi,
		&
		\lnot c
		&\equiv 1-c,
		&
		\lnot(\phi \land \psi)
		&\equiv \lnot\phi \lor \lnot\psi,
		\\
		\lnot(\phi \lor \psi)
		&\equiv \lnot\phi \land \lnot\psi,
		&
		\lnot(\phi \ominus c)
		&\equiv (\lnot\phi) \oplus c,
		&
		\lnot(\phi \oplus c)
		&\equiv (\lnot\phi) \ominus c,
		\\
		\lnot\Diamond\phi
		&\equiv \Box\lnot\phi,
		&
		\lnot\Box\phi
		&\equiv \Diamond\lnot\phi.
	\end{align*}
	Here, $1-c$ denotes the truth constant with that value. 
	
	\begin{lemma}\label{lemma:nnf}
		Every formula $\phi$ is equivalent to a formula $\psi$ in
		negation normal form with
		$\lvert\psi\rvert = O(\lvert\phi\rvert)$.
		Moreover, $\psi$ has the same atoms and modal depth as $\phi$.
	\end{lemma}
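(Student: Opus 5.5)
We prove the lemma by defining, by simultaneous structural recursion on $\phi$, two translations $\phi\mapsto\phi^+$ and $\phi\mapsto\phi^-$ into negation normal form, intended to satisfy $\phi^+\equiv\phi$ and $\phi^-\equiv\lnot\phi$. The clauses are read off directly from the displayed equivalences:
\begin{gather*}
p^+=p,\quad p^-=\lnot p,\quad 0^+=0,\quad 0^-=1,\quad (\lnot\phi)^+=\phi^-,\quad (\lnot\phi)^-=\phi^+,\\
(\phi\ominus c)^+=\phi^+\ominus c,\quad (\phi\ominus c)^-=\phi^-\oplus c,\quad (\phi\land\psi)^+=\phi^+\land\psi^+,\quad (\phi\land\psi)^-=\phi^-\lor\psi^-,\\
(\Diamond\phi)^+=\Diamond\phi^+,\quad (\Diamond\phi)^-=\Box\phi^- .
\end{gather*}
We then take $\psi=\phi^+$. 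Recursing on the primitive grammar \eqref{eq:grammar} is deliberate: the derived connectives then never need to be unfolded. If one instead wants to accept input in which $\lor,\oplus,\Box$ and constants already occur, one adds the analogous dual clauses, e.g.\ $(\phi\oplus c)^-=\phi^-\ominus c$, $(\Box\phi)^-=\Diamond\phi^-$, $c^-=1-c$.

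First, by induction on $\phi$ we check that $\phi^+$ and $\phi^-$ are in negation normal form: negations are only ever produced in the clause $p^-=\lnot p$. Second, a simultaneous induction shows $\phi^+\equiv\phi$ and $\phi^-\equiv\lnot\phi$. Each case is one of the listed equivalences together with the induction hypothesis and the fact that $\equiv$ is a congruence for all connectives, which holds because the semantics is compositional. For example, $(\phi\ominus c)^-=\phi^-\oplus c\equiv(\lnot\phi)\oplus c\equiv\lnot(\phi\ominus c)$. For the modal case, $\Box\phi^-\equiv\Box\lnot\phi\equiv\lnot\Diamond\phi$, using \eqref{eq:box-sem} and the definition of $\Box$.

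Third, a simultaneous induction gives $|\phi^\pm|\le|\phi|$ (or at most $2|\phi|$ if constants are counted as $\lnot 0\ominus(1-c)$), since every clause replaces one syntax node by at most one node. The only exception is the literal clause $p^-=\lnot p$, which adds one node. So the size bound is linear, and in fact the size increases only by the number of atom occurrences, since each atom is replaced by at most a literal. Finally, a further straightforward induction shows that the atom set and modal depth are preserved, because every clause maps atoms to atoms or negated atoms and maps each modality to exactly one modality.

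The main obstacle is mild. The crux is the decision to define $\phi^+$ and $\phi^-$ simultaneously rather than as a naive rewrite system: naively pushing $\lnot$ through \emph{expanded} abbreviations (e.g.\ unfolding $\lor$ into $\lnot(\lnot\cdot\land\lnot\cdot)$) could duplicate negations and obscure the linear bound. With the dual translation, each subformula is processed exactly once in each polarity, and the linear bound is immediate.
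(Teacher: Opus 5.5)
Your proof is correct and follows essentially the same route as the paper, which pushes negations inward by applying the listed equivalences left to right and notes that no rule duplicates a subformula; your simultaneous recursion for $\phi^+$ and $\phi^-$ formalizes that rewriting by structural recursion, so no separate termination argument is needed. One small slip: the bound $|\phi^\pm|\le|\phi|$ is false as stated (already $|p^-|=2>|p|$), but your corrected bound, namely $|\phi|$ plus the number of atom occurrences and hence at most $2|\phi|$, follows by the same induction and suffices.
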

	
        \begin{defn}
          The \emph{logical complement} of a judgement $\phi\ineqOp c$ is the judgement $\theta\strictnessOp{\ineqOp} 1-c$ where $\theta$ is the negation normal form of $\neg\phi$.
        \end{defn}
        As the name indicates, the logical complement of a judgement~$\alpha$ is satisfied iff~$\alpha$ is not satisfied.
        
	Next, we introduce a way of adjusting judgements. Roughly speaking, this allows us to choose a comparison operator and a numerical bound in advance and replace a given judgement by an equivalent one with exactly this comparison operator and bound. This will allow us to combine judgements conveniently using propositional operators. More importantly, when applying the modal rules of the calculi introduced later, we will be able to adjust a judgement involving a modal formula so that removing the modality yields a judgement with a prescribed comparison operator and bound.
	
	\begin{defn}
		Let~$\phi \ineqOp c$ be a judgement, where~$c \in \unitInterval$. Let~$\ineqOpSymbol' \in \ineqOpSet$ and~$c' \in \unitInterval$. A \emph{$(\ineqOpSymbol',c')$-adjustment of~$\phi \ineqOp c$} is a formula~$\psi$ such that
		\begin{equation*}
			\phi \ineqOp c
			\quad\equiv\quad
			\psi \ineqOp' c'.
		\end{equation*}
	\end{defn}
    \noindent We now give a construction of adjustments in this sense.
    Note that this construction is partial as adjustments may fail to exist in edge cases (\Cref{remark:adjustmentConstructionUndefined}). Whenever we employ the construction later, it will be defined. 
          
    For a comparison operator~$\ineqOpSymbol$, we write $\strictnessOp{\ineqOpSymbol}$ for the comparison operator obtained by changing strictness while preserving direction, and $\directionOp{\ineqOpSymbol}$ for the comparison operator obtained by reversing direction while preserving strictness. We write $\complementOp{\ineqOpSymbol} = \directionOp{(\strictnessOp{\ineqOpSymbol})}$, so $\complementOp{\ineqOpSymbol}$ is the logical complement of~$\ineqOpSymbol$.
    For example, $\strictnessOp{>}$ is~$\geq$, $\directionOp{>}$ is~$<$, and $\complementOp{>}$ is~$\leq$.
   
    For a formula~$\phi$ and a real number~$t$, we define the
    \emph{truncated shift}~$\shiftformula{t}{\phi}$ of~$\phi$ by~$t$
    by
    \begin{equation*}
    	\shiftformula{t}{\phi}\coloneq
    	\begin{cases}
    		0 &\text{if }t\leq -1,\\
    		\phi\ominus(-t) &\text{if }-1<t<0,\\
    		\phi &\text{if }t=0,\\
    		\phi\oplus t &\text{if }0<t<1,\\
    		1 &\text{if }t\geq 1.
    	\end{cases}
    \end{equation*}
    Thus, $\shiftformula{t}{\phi}$ is a formula such that
    \begin{equation*}
    	V_{\mathcal{M}}(\shiftformula{t}{\phi},w)
    	=\min\bigl(1,\max(0,V_{\mathcal{M}}(\phi,w)+t)\bigr)
    \end{equation*}
    for every fuzzy Kripke model~$\mathcal{M}$ and every
    world~$w\in W_{\mathcal{M}}$. By straightforward computation, one
    shows:
	\begin{lemma}\label{lemma:assertionnormalization}
		Let~$\phi \ineqOp c$ be a judgement and let $\ineqOpSymbol' \in \ineqOpSet$ be a comparison operator such that~$\ineqOp$ and~$\ineqOpSymbol'$ are either both strict or both non-strict, and let~$c, c' \in \unitInterval$. Then the following construction of $\normassertion{\ineqOpSymbol'}{c'}(\phi \ineqOp c)$ yields a $(\ineqOpSymbol',c')$-adjustment of~$\phi \ineqOp c$ when it is defined:
		\begin{align*}
			\normassertion{\geq}{c'}(\phi\geq c)
			&\coloneq \shiftformula{c'-c}{\phi},
			&&\text{if }c'>0\text{ or }c=0,\\
			\normassertion{>}{c'}(\phi>c)
			&\coloneq \shiftformula{c'-c}{\phi},
			&&\text{if }c'<1\text{ or }c=1,\\
			\normassertion{\gineqOpSymbol}{c'}
			(\phi\directionOp{\gineqOp}c)
			&\coloneq
			\normassertion{\gineqOpSymbol}{c'}
			(\lnot\phi\gineqOp 1-c),\\
			\normassertion{\directionOp{\gineqOpSymbol}}{c'}
			(\phi\ineqOp c)
			&\coloneq
			\lnot\normassertion{\gineqOpSymbol}{1-c'}
			(\phi\ineqOp c).
		\end{align*}
	\end{lemma}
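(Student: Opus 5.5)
We argue by case analysis along the four defining clauses. The first two clauses are the base cases and are checked by direct computation. The last two reduce to the base cases through the identity $V(\lnot\psi,w)=1-V(\psi,w)$ and the reflection $x\ineqOp c\iff 1-x\directionOp{\ineqOp}1-c$. Throughout, write $x=V_{\mathcal M}(\phi,w)\in\unitInterval$ and use the formula $V_{\mathcal M}(\shiftformula{t}{\phi},w)=\min(1,\max(0,x+t))$ stated just before the lemma.

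\textbf{The $\geq$ clause.} Put $t=c'-c$ and $y=\min(1,\max(0,x+t))$. We must show $x\geq c\iff y\geq c'$, assuming $c'>0$ or $c=0$.
If $x\geq c$, then $x+t\geq c'\geq 0$, so $y=\min(1,x+t)\geq c'$ because $c'\leq1$.
Conversely, let $y\geq c'$. If $c'>0$, then $y>0$, so $y=\min(1,x+t)$ and hence $x+t\geq c'$, i.e.\ $x\geq c$. If instead $c=0$, then $x\geq c$ holds trivially.

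\textbf{The $>$ clause.} This is symmetric, now assuming $c'<1$ or $c=1$.
If $x>c$, then $x+t>c'\geq0$, so $\max(0,x+t)=x+t$, and $y=\min(1,x+t)>c'$ since $1>c'$ in the case $c'<1$. In the case $c=1$, $x>c$ is impossible.
Conversely, if $y>c'$, then $y>0$, so $y=\min(1,x+t)$ and $x+t\geq y>c'$, giving $x>c$.

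These two clauses are where the side conditions are used. They are exactly what excludes truncation at $0$ (for $\geq$) and at $1$ (for $>$) from breaking the equivalence, so this is the only step needing care.

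\textbf{The direction-reversing clauses.} For the third clause, $\phi\directionOp{\gineqOp}c\equiv\lnot\phi\gineqOp1-c$. Applying the base case gives $\normassertion{\gineqOpSymbol}{c'}(\lnot\phi\gineqOp 1-c)\gineqOp c'$, which is equivalent to the original judgement. Strictness is preserved throughout, and $1-c\in\unitInterval$.

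For the fourth clause, let $\psi=\normassertion{\gineqOpSymbol}{1-c'}(\phi\ineqOp c)$. By the previous cases, $\phi\ineqOp c\equiv\psi\gineqOp 1-c'$, and this is equivalent to $\lnot\psi\directionOp{\gineqOp}c'$. So $\lnot\psi$ is a $(\directionOp{\gineqOpSymbol},c')$-adjustment, again provided the inner construction is defined.

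\textbf{Well-foundedness.} The clauses call one another, so we check that the recursion terminates. The fourth clause reduces to a target operator in $\gineqOpSet$. The third clause reduces a source operator in $\lineqOpSet$ to one in $\gineqOpSet$. The first two clauses handle the remaining case, where both operators lie in $\gineqOpSet$. Hence every application of the construction unfolds to a base case in at most two steps, and whenever it is defined the result is a $(\ineqOpSymbol',c')$-adjustment.
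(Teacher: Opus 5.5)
Your proof is correct. It is exactly the ``straightforward computation'' the paper invokes without further detail: you verify the two base clauses directly from $V_{\mathcal M}(\shiftformula{t}{\phi},w)=\min(1,\max(0,x+t))$, with the side conditions used precisely to rule out truncation at $0$ and at $1$ respectively, and you reduce the two direction-changing clauses to these via fuzzy negation. Your observation that the recursion unfolds to a base case in at most two steps is a useful addition that the paper leaves implicit.
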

        \noindent
        Whenever~$\normassertion{\ineqOpSymbol'}{c'}(\phi \ineqOp c)$
        is defined, then we have
        $\lvert \normassertion{\ineqOpSymbol'}{c'}(\phi \ineqOp c)
        \rvert \leq \lvert \phi \rvert + 4$. Here each numerical
        constant counts as having size~$1$. If rational constants are
        instead represented in binary, the additional encoding length
        is instead linear in the combined encoding lengths of~$c$ and~$c'$.
	
	\begin{remark}\label{remark:adjustmentConstructionUndefined}
          Under the assumptions of
          \cref{lemma:assertionnormalization}, the construction of
          $\normassertion{\ineqOpSymbol'}{c'}(\phi \ineqOp c)$ is
          defined whenever the prescribed comparison operator and
          numerical bound do not make the target judgement trivially
          true or false. The target forms for which the construction
          may be undefined are $\psi\geq 0$ and $\psi\leq 1$, which
          are always true, and $\psi>1$ and $\psi<0$, which are always
          false.  For an always-true target, the construction is
          defined precisely when the source judgement is $\phi\geq 0$
          or $\phi\leq 1$.  Similarly, for an always-false target, it
          is defined precisely when the source judgement is $\phi>1$
          or $\phi<0$.  Undefinedness of the construction does not
          imply that no adjustment exists, but the undefined cases are
          always trivial.
	\end{remark}

	\begin{convention}
		For ease of notation, we will from here on also just write~$\normassertiondefault(\phi \ineqOp c)$ for~$\normassertion{\gineqOpSymbol}{0.5}(\phi \ineqOp c)$, where $\gineqOpSymbol\in\{>,\ge\}$ has the same strictness as~$\ineqOpSymbol$. We call this the \emph{normalization} of $\phi \ineqOp c$.
	\end{convention}
	
        \section{A Sequent Calculus For Fuzzy Modal
          Logic}\label{sec:sequent}
Next, we introduce a sequent calculus for validity in non-expansive
fuzzy modal logic, which is essentially obtained 
as the dual of the tableau calculus
for non-expansive fuzzy $\ALC$~\cite{GebhartEA25}. 

While classical propositional and first-order logic admit
Gentzen-style sequent calculi~\cite{Gentzen35}, proof systems for \L{}ukasiewicz logic often include hypersequent and labelled sequent calculi (e.g.~\cite{MetcalfeEA05,KulackaEA13}).
In the case of non-expansive fuzzy modal logic, we can, however, retain a comparatively simple, single-sequent calculus with Gentzen-style rules. This allows us to avoid much of the additional proof-theoretic machinery that is commonly used for full \L{}ukasiewicz logics.

The calculus works with sequents being finite sets of judgements
$\phi\ineqOp c$ as per \cref{sec:fuzzy-ml}. 
It also uses the \emph{coverage composite}~$\covcompDef{\gineqOpSymbol_1}{\gineqOpSymbol_2}$, which is defined by
\begin{equation*}
	\covcompDef{\gineqOpSymbol_1}{\gineqOpSymbol_2} = \begin{cases}
		> &\text{if both comparison operators are strict,}\\
		\geq &\text{otherwise.}
	\end{cases}
\end{equation*}
Equivalently, for $x,z\in\mathbb{R}$, we have $x \covcomp{\gineqOpSymbol_1}{\gineqOpSymbol_2} z \Longleftrightarrow \forall y \in \mathbb{R} . (x \gineqOp_1 y \lor y \gineqOp_2 z)$.
The rules of the calculus
are shown in \Cref{tab:seqcalc}. 
We say that a rule in \Cref{tab:seqcalc} is \emph{applicable} to, or \emph{applies} to, a sequent~$\Gamma$ if $\Gamma$ is an instance of the conclusion of the rule and the corresponding side conditions are satisfied. A sequent is then \emph{derivable} if it can be proved by finitely many rule applications, i.e., each branch eventually ends at a leaf with an empty label. Every rule in \Cref{tab:seqcalc} except the rule~$(\Box)$ is a
\emph{propositional rule}. A sequent is \emph{propositionally
  saturated} if no propositional rule applies to it.
	
The propositional rules are understood semantically as
follows. Rules~$(\text{Ax }c)$,~$(\text{Ax }p)$,
and~$(\text{Range}_\top)$ are axioms capturing valid
inequalities among constants, the fact that the truth value of any
atom~$p$ lies in the unit interval, and that the truth value of any
formula is at least~$0$, respectively. The side condition
of~$(\text{Ax }p)$ says that $c_1+c_2$ is `small enough' compared
to~$1$, e.g.~that $1\ge c_1+c_2$ in case $\gineqOp_1$ is~$\ge$ and
$\gineqOp_2$ is~$>$. The rules~$(\text{Range}_\bot)$ and~$(\text{Const}_\bot)$ are used to drop
unsatisfiable judgements demanding that the truth value of a formula
exceeds~$1$ or a constant satisfies a false judgement. Although a general weakening rule would
subsume~$(\text{Range}_\bot)$ and~$(\text{Const}_\bot)$, we include them explicitly because the
\interpolationCalculus for computing interpolants introduced later requires analogous rules that apply only to judgements of these forms and because they are, unlike the more general weakening rule, invertible. The
rules $(\lor)$,~$(\land)$ follow the semantics of the
corresponding connectives.  The side condition of the otherwise
straightforward rule~$(\ominus)$ is not strictly required for soundness but
blocks application of the rule to judgements that are trivially valid
because truth values are non-negative. This will ensure the rule is invertible and will later be relevant for the \interpolationCalculus. Symmetrically, the side
condition of rule~$(\oplus)$ for truncated addition blocks application
to judgements that are trivially unsatisfiable because truth values
are never strictly above~$1$, and as such is in fact required to
ensure soundness.

The modal rule~$(\Box)$ is analogous in spirit to the corresponding
rule in classical modal logic, but additionally accounts for the
graded accessibility relation. Its side condition
$1 \covcomp{\gineqOp}{\gineqOp_i} c+c_i$
requires $c+c_i<1$ if both comparison operators are strict, and
$c+c_i\leq 1$ otherwise. For example, suppose both comparison operators are non-strict:
If $\Box\phi\geq c$ fails at a world, then some successor has
accessibility degree $u>1-c$ and satisfies $\phi<c$.
The premise then forces one of the judgements
$\psi_i\geq c_i$ to hold at that successor. Since the side condition
ensures $u>1-c\geq c_i$, this gives us
$\Diamond\psi_i\geq c_i$ at the original world.
As in the classical case, there may be several possible applications
of~$(\Box)$ to the same sequent, one for every judgement of the form $\Box \phi \gineqOp c$. 
	
\begin{table}
  \centering
  \begin{tabular}{c}
    \toprule
    \normalsize Rules of the sequent calculus \\
    \midrule
    \addlinespace[1em]
    {\ruleSize
    $\displaystyle
    (\text{Ax }c) \ \dfrac{}{\Gamma,c \gineqOp d}\ (c \gineqOp d) 
    \qquad 
    (\text{Ax }p) \ \dfrac{}{\Gamma, p \gineqOp_1 c_1, \lnot p \gineqOp_2 c_2}\ (1 \covcomp{\gineqOp_1}{\gineqOp_2} c_1 + c_2)
    \qquad
    $
    }
    \\
    \addlinespace[1em]
    {\ruleSize
    $\displaystyle
    (\text{Const}_\bot)\quad
    \dfrac{\Gamma}{\Gamma,c \gineqOp d}
    \quad(c \complementOp{\gineqOp} d)
    \qquad
    (\text{Range}_\top) \ \dfrac{}{\Gamma, \phi \gineqOp c}\ (0 \gineqOp c)
    \qquad
    (\text{Range}_\bot) \ \dfrac{\Gamma}{\Gamma, \phi \gineqOp c}\ (c \strictnessOp{\gineqOp} 1)
    $
    }
    \\
    \addlinespace[1em]
    {\ruleSize
    $\displaystyle
    (\ominus) \ \dfrac{\Gamma, \phi \gineqOp c + d}{\Gamma, \phi \ominus d \gineqOp c}\ (c \strictnessOp{\gineqOp} 0)
    \qquad
    (\oplus) \ \dfrac{\Gamma, \phi \gineqOp c - d}{\Gamma, \phi \oplus d \gineqOp c}\ (1 \gineqOp c)
    \qquad
    (\lor) \  \dfrac{\Gamma, \phi \gineqOp c, \psi \gineqOp c}{\Gamma, \phi \lor \psi \gineqOp c} 
    $
    }
    \\
    \addlinespace[1em]
    {\ruleSize
    $\displaystyle
    (\land) \  \dfrac{\Gamma, \phi \gineqOp c \qquad \Gamma, \psi \gineqOp c}{\Gamma, \phi \land \psi \gineqOp c} 
    \qquad
    (\Box) \ \dfrac{\phi \gineqOp c, \{\psi_i \gineqOp_i c_i \mid 1 \covcomp{\gineqOp}{\gineqOp_i} c + c_i\}}{\Gamma, \Box \phi \gineqOp c, \{\Diamond \psi_i \gineqOp_i c_i \mid 1 \leq i \leq n\}}
    $
    }
    \\
    \addlinespace[0.5em]
			\bottomrule
  \end{tabular}\medskip
  
  \caption{Sequent calculus for non-expansive fuzzy modal logic; side
    conditions of rules on the right.}
  \label{tab:seqcalc}
\end{table}

The calculus handles only judgements with comparison operators
$>$ and $\geq$ and assumes that formulae are in negation normal form.
By \Cref{lemma:assertionnormalization,lemma:nnf}, we may assume
both conditions without loss of generality. This preprocessing
increases syntactic size by at most a constant factor, which we
ignore from here on.
	
\begin{theorem}\label{theorem:sequentcalccorrect}
  The sequent calculus in \Cref{tab:seqcalc} is sound and
  complete, i.e.~a sequent has a derivation if and only if it
  is valid.
\end{theorem}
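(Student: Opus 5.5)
Soundness and completeness are proved separately. Throughout, sequents consist of judgements with operators $>$ or $\geq$ and formulae in negation normal form; a sequent is read disjunctively.

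\emph{Soundness.} We show that each rule preserves validity from premises to conclusion, and then induct over derivations.

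\begin{itemize}
\item Axioms. $(\text{Ax }c)$ and $(\text{Range}_\top)$ are immediate. For $(\text{Ax }p)$: if $V(p,w)\not\gineqOp_1 c_1$, then $1-V(p,w)$ satisfies $\gineqOp_2 c_2$, because $1 \covcomp{\gineqOp_1}{\gineqOp_2} c_1+c_2$ holds. This is the characterisation of the coverage composite with $x=1$, $y=V(p,w)+c_2$, $z=c_1+c_2$.
\item $(\text{Range}_\bot)$, $(\text{Const}_\bot)$, $(\lor)$, $(\land)$. These follow because the added judgement is unsatisfiable, or directly from the semantics of the connectives.
\item $(\ominus)$. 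If $V(\phi)\gineqOp c+d$ with $c+d>0$, then $V(\phi\ominus d)=\max(V(\phi)-d,0)\gineqOp c$.
\item $(\oplus)$. Here we need $\min(V(\phi)+d,1)\gineqOp c$. This holds because $1\gineqOp c$, which is exactly why the side condition is needed.
\item $(\Box)$. This is the case worked out in the text preceding the table. Suppose $\Box\phi\gineqOp c$ fails at $w$. Then some successor $v$ has $R(w,v)=u$ with $u \strictnessOp{\gineqOp}$-exceeding $1-c$ appropriately, and $\phi\not\gineqOp c$ at $v$. The premise then gives some $\psi_i\gineqOp_i c_i$ at $v$ with $1\covcomp{\gineqOp}{\gineqOp_i} c+c_i$. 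We conclude $\min(u,V(\psi_i,v))\gineqOp_i c_i$, that is, $\Diamond\psi_i\gineqOp_i c_i$ at $w$. A short case distinction on strictness handles the details.
\end{itemize}

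\emph{Completeness.} We argue contrapositively: from a non-derivable sequent $\Gamma$ we build a countermodel, i.e.\ a world at which every judgement in $\Gamma$ fails.

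\begin{enumerate}
\item \textbf{Invertibility.} Every propositional rule is invertible: if the conclusion is valid, so is each premise. This is where the side conditions of $(\ominus)$, $(\text{Range}_\bot)$ and $(\text{Const}_\bot)$ are used. For example, without $c\strictnessOp{\gineqOp}0$, the judgement $\phi\ominus d\geq 0$ is valid while $\phi\geq d$ is not.
\item \textbf{Termination.} Propositional rules strictly decrease a multiset measure on formula sizes. Hence a non-derivable $\Gamma$ has a non-derivable propositionally saturated sequent $\Gamma'$ on some branch. By invertibility, a countermodel for $\Gamma'$ is also one for $\Gamma$: under semantic reading, the premise's judgements are failed exactly when the conclusion's are.
\item \textbf{Shape of a saturated sequent.} A saturated non-derivable $\Gamma'$ contains only judgements of the following kinds: on literals, on modalized formulae, or trivially false judgements (constants, and $\phi\gineqOp c$ with $c$ beyond $1$). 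The trivially false ones could only have survived if they were never applicable, and they are removed by the $\bot$-rules.
\item \textbf{Valuation.} We set $V(p,w)$ to fail all literal judgements. Take the largest lower bound forced by the $\lnot p$ judgements and the smallest upper bound forced by the $p$ judgements. Non-applicability of $(\text{Ax }p)$, read via the covcomp characterisation, guarantees a value strictly between them respecting strictness. Non-applicability of $(\text{Range}_\top)$ guarantees $c>0$, or $c\ge0$ for strict operators, so failing is possible.
\item \textbf{Successors.} For each $\Box\phi\gineqOp c\in\Gamma'$, the premise of $(\Box)$ is non-derivable, by an induction on modal depth that proves completeness for all sequents of smaller depth. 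So it has a countermodel $(\mathcal M_\phi,v_\phi)$. We attach $v_\phi$ as a successor with weight $u=1-c$ if $\gineqOp$ is $>$, and $u$ slightly above $1-c$ if $\gineqOp$ is $\geq$, chosen below all thresholds $1-c_i$ of excluded diamonds. Then $\Box\phi\gineqOp c$ fails at $w$.
\item \textbf{Diamonds fail.} Each $\Diamond\psi_i\gineqOp_i c_i$ must also fail. At a successor $v_\phi$ where $\psi_i$ was included in the premise, it fails by the countermodel. Where it was excluded, $1\covcomp{\gineqOp}{\gineqOp_i}c+c_i$ fails, so $u\not\gineqOp_i c_i$. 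If there are no boxes, $w$ has no successors and every diamond has value $0$, which fails since $(\text{Range}_\top)$ did not apply.
\end{enumerate}

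The main obstacle is the choice of $u$ in step 5. In the non-strict box case we need $u>1-c$ together with $u\not\gineqOp_i c_i$ for every excluded $i$. Exclusion means $c+c_i>1$, or $c+c_i\ge1$ when both operators are strict, and the finiteness of the sequent gives room for such a $u$. This requires a careful four-way case analysis on the strictness of $\gineqOp$ and $\gineqOp_i$, mirroring the correctness proof of the dual tableau calculus of~\cite{GebhartEA25}. An alternative would be to simply invoke that tableau correctness via the logical-complement duality, but I would carry out the direct case analysis.
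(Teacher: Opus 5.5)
\textbf{Gap: soundness of $(\Box)$ when $\gineqOp$ is $>$.} Your completeness half is essentially the paper's argument. You saturate with the locally invertible propositional rules, take countermodels of the modal premises with all eligible diamonds selected, and hang them below a fresh root with weight $1-c$ for strict boxes and $1-c+\varepsilon$ for non-strict ones. The problem is in soundness. Failure of $\Box\phi>c$ at $w$ only says $\inf_{v}\max\{1-R(w,v),V(\phi,v)\}\le c$. Since models may be infinite, this infimum need not be attained, so there need not be any successor $v$ with $R(w,v)\ge 1-c$ and $V(\phi,v)\le c$.

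For example, give $w$ successors $v_n$ with $R(w,v_n)=1$ and $V(q,v_n)=c+1/n$. Then $\Box q>c$ fails at $w$, but every $v_n$ satisfies $q>c$. So the premise holds at every successor and your pointwise argument yields no contradiction. A case distinction on strictness cannot repair this, since the strict case is exactly where the witness is missing. Approximating the infimum by a sequence of successors does not help either, because $\phi>c$ may hold at every approximant. (For $\gineqOp$ equal to $\ge$ your argument is fine: the infimum is then strictly below $c$, so a witness exists.)

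\textbf{What is needed.} You need a compactness argument. The limit of the truth-value vectors of $\phi,\psi_1,\dots,\psi_n$ along such a sequence must again be realized at some world, so that the valid premise can be applied at the limit point. The paper does this via \Cref{lemma:finite-realization}. Any countermodel of the conclusion can be replaced by a finite tree model realizing the same truth values for the finitely many formulae involved. In a finite model the infimum defining $V(\Box\phi,w)$ is a minimum, so a witnessing successor exists and your argument goes through. Without such a lemma, your soundness proof only covers models in which these infima are attained.

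\textbf{Smaller points in step~5.}
\begin{itemize}
\item The weight $u$ must lie below the bounds $c_i$ of the excluded diamonds, not below $1-c_i$. Your step~6 and your final paragraph already use the correct condition $u\not\gineqOp_i c_i$.
\item The requirement $u\le 1$ in the non-strict case needs $c>0$, which comes from non-applicability of $(\text{Range}_\top)$.
\item The weight-$0$ edges need non-applicability of $(\text{Range}_\top)$ for the diamond judgements even when boxes are present, not only when there are none.
\end{itemize}
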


\section{Uniform Interpolation of Threshold Judgements}\label{sec:uniform-interpolation}
\noindent We proceed to establish an important stepping stone towards
our main result, a generalized form of deductive uniform interpolation
where we parametrize weak consequence (`if~$\phi$ has truth value~$1$,
then~$\psi$ has truth value~$1$', \cref{sec:prelim}) with lower bounds
on truth values (`if~$\phi$ has truth value at least~$c$, then~$\psi$
has truth value at least~$d$'); that is, we construct uniform post-interpolants of threshold judgements. The proof of uniform interpolation
proper (\cref{theorem:pointwise-interpolation}) will then rely on a
symbolic treatment of lower bounds. We fix the central definitions:
\begin{defn}
  \begin{enumerate}[wide]
  \item For a judgement~$\alpha$, let $\voc(\alpha)$ (the
    \emph{vocabulary} of~$\alpha$) denote the set of all atoms
    occurring in the formula of~$\alpha$.
  \item For an atom~$p$, a judgement~$\alpha$ is \emph{$p$-free} if
    $p\notin\voc(\alpha)$.
  \item For judgements~$\alpha$ and~$\beta$, we write $\alpha \models \beta$, and say that~$\beta$ is a \emph{local consequence} of~$\alpha$, if, for every model~$\mathcal{M}$ and every world~$w\in W_{\mathcal{M}}$, we have  
    \begin{equation*}
      \mathcal{M},w\models\alpha \quad\Longrightarrow\quad \mathcal{M},w\models\beta.
    \end{equation*}
  \item Suppose that $\alpha\models\beta$. A judgement~$\gamma$ is an
    \emph{interpolant} of~$\alpha\models\beta$ if $\voc(\gamma) \subseteq \voc(\alpha)\cap\voc(\beta)$,
    \begin{equation*}
      \alpha\models\gamma,
      \qquad\text{and}\qquad
      \gamma\models\beta.
    \end{equation*}
  \item A judgement~$\exists p.\,\alpha$ is a \emph{(uniform) post-interpolant} of a judgement~$\alpha$ w.r.t.~an atom~$p$ if the following conditions hold:
    \begin{enumerate}
    \item $\exists p.\,\alpha$ is $p$-free;
    \item $\alpha\models\exists p.\,\alpha$; and
    \item for every $p$-free judgement~$\gamma$,
      \begin{equation*}
        (\alpha\models\gamma)
        \quad\Longrightarrow\quad
        (\exists p.\,\alpha\models\gamma).
      \end{equation*}
    \end{enumerate}
    Dually, a judgement~$\forall p.\,\alpha$ is a \emph{(uniform) pre-interpolant} of~$\alpha$ w.r.t.~$p$ if the following conditions hold:
    \begin{enumerate}
    \item $\forall p.\,\alpha$ is $p$-free;
    \item $\forall p.\,\alpha\models\alpha$; and
    \item for every $p$-free judgement~$\gamma$,
      \begin{equation*}
        (\gamma\models\alpha)
        \quad\Longrightarrow\quad
        (\gamma\models\forall p.\,\alpha).
      \end{equation*}
    \end{enumerate}
  \end{enumerate}
\end{defn}
\noindent (Notice that what was termed weak consequence of~$\psi$
from~$\phi$ in \cref{sec:prelim} would correspond to
$\phi\ge 1\models\psi\ge 1$ in the notation introduced above.)
\begin{remark}
  It is easy to see that if a logic admits post-interpolants for every
  judgement~$\alpha$, then it also admits pre-interpolants for every
  judgement. Indeed, one may use the quantifier duality
  $\forall p.\,\alpha \equiv \lnot \exists p.\, (\lnot \alpha)$, where
  the negation of a judgement is interpreted by replacing its
  comparison operator with the complementary one. Furthermore, a logic
  that admits either pre- or post-interpolants also admits
  interpolants of local consequences, since one may simply eliminate every atom
  outside the shared vocabulary. Furthermore, post-interpolants for
  judgements extend naturally to sequents. Given a sequent, we compute
  a post-interpolant for each of its judgements and normalize the
  resulting judgements. We then group them according to the strictness
  of their original comparison operators and, within each group,
  combine their formulae using~$\lor$. Finally, we impose~$0.5$ as a lower bound with the corresponding
  strictness. The resulting post-interpolant is therefore either a single judgement or a sequent containing at most two judgements. For sequents whose judgements are either all strict or all non-strict, normalization allows us to combine the entire sequent into a single equivalent judgement, to which the pre-interpolant construction applies. For sequents containing both strict and non-strict judgements, the existence of pre-interpolants requires its own construction.
\end{remark}
The computation of post-interpolants will run in parallel with proofs
in the sequent calculus of \cref{sec:sequent}. To aid intuition, we
will in fact work with a dualized notion:
\begin{defn}
  A judgement~$\beta$ is a \emph{$p$-free complement} of a
  sequent~$\Gamma$ if~$\beta$ is a strongest $p$-free
  judgement~$\beta$ such that~$\Gamma,\beta$ is provable. That is, $\Gamma,\beta$ is provable and, for
  every $p$-free judgement~$\gamma$,
  \begin{equation*}
    \Gamma,\gamma \text{ is provable}
    \quad\Longrightarrow\quad
    \beta \models \gamma.
  \end{equation*}
\end{defn}

\noindent It is easy to see that $p$-free complements are unique up to
logical equivalence if they exist, so we write~$\elim{p}(\Gamma)$ for
the $p$-free complement of~$\Gamma$.  Judgements whose bounds lie
outside $[0,1]$ are either valid or unsatisfiable and therefore have
constant post-interpolants.  We may thus restrict to bounds in $[0,1]$
in the following lemma, which records the intended effect that
post-interpolants are obtained straightforwardly from $p$-free
complements:

\begin{lemma}\label{lem:complement-interpolant}
  If\
  $\elim{p}(\{\normassertiondefault(\phi \complementOp{\ineqOp} c) \gineqOp 0.5\})$
  exists for a judgement~$\phi \ineqOp c$ with $c \in \unitInterval$ and an atom~$p$, then 
  \begin{equation*}
    \exists p.\, (\phi \ineqOp c) = \elim{p}(\{\normassertiondefault(\phi \complementOp{\ineqOp} c) \gineqOp 0.5\})
  \end{equation*}
  is a post-interpolant of~$\phi \ineqOp c$ w.r.t.~$p$. Here, $\gineqOpSymbol$ has the opposite strictness of $\ineqOp$.
\end{lemma}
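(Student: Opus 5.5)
Write $\beta := \elim{p}(\{\normassertiondefault(\phi \complementOp{\ineqOp} c) \gineqOp 0.5\})$. The plan is to derive all three post-interpolant conditions directly from the definition of a $p$-free complement, using soundness and completeness of the sequent calculus (\cref{theorem:sequentcalccorrect}) to move between provability and validity.

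First, observe that the single judgement $\normassertiondefault(\phi \complementOp{\ineqOp} c) \gineqOp 0.5$ is equivalent to $\phi \complementOp{\ineqOp} c$. This holds because normalization is a $(\gineqOpSymbol,0.5)$-adjustment (\cref{lemma:assertionnormalization}). The comparison operator $\gineqOpSymbol$ has the strictness of $\complementOp{\ineqOp}$, and hence the opposite strictness to $\ineqOp$, matching the convention in the statement. By the remark on undefinedness, the adjustment is always defined for target bound $0.5 \in (0,1)$. Consequently, for any judgement $\gamma$, the sequent $\{\normassertiondefault(\phi \complementOp{\ineqOp} c) \gineqOp 0.5, \gamma\}$ is valid iff every world either satisfies $\phi \complementOp{\ineqOp} c$ or satisfies $\gamma$. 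Since $\phi \complementOp{\ineqOp} c$ is the logical complement of $\phi \ineqOp c$, this is exactly $\phi \ineqOp c \models \gamma$. By \cref{theorem:sequentcalccorrect}, provability of that sequent is therefore equivalent to $\phi \ineqOp c \models \gamma$. One small technicality: the calculus only admits $\gineqOpSymbol$-judgements in negation normal form, so $\gamma$ must be read through its NNF/normalized form via \cref{lemma:nnf} and \cref{lemma:assertionnormalization}. This preserves both equivalence and $p$-freeness, since those constructions keep the atoms unchanged.

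With this equivalence the three conditions follow. Condition (a) holds because $\beta$ is $p$-free by definition of a $p$-free complement. Condition (b) holds because, by definition, $\{\normassertiondefault(\phi \complementOp{\ineqOp} c) \gineqOp 0.5, \beta\}$ is provable, which by the equivalence gives $\phi \ineqOp c \models \beta$. For condition (c), take any $p$-free $\gamma$ with $\phi \ineqOp c \models \gamma$. The equivalence makes $\{\normassertiondefault(\phi \complementOp{\ineqOp} c) \gineqOp 0.5, \gamma\}$ provable. The strongest-ness clause in the definition of $p$-free complement then yields $\beta \models \gamma$.

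The main obstacle I expect is the bookkeeping around the normal forms that the calculus requires. Specifically, arbitrary $p$-free judgements $\gamma$, including those with operators $\le,<$ or bounds outside $[0,1]$, must be converted into an equivalent admissible judgement so that ``$\Gamma,\gamma$ provable'' is meaningful. I plan to handle the out-of-range or trivial cases separately by replacing $\gamma$ with an equivalent constant judgement such as $0 \ge 0$ or $0 > 1$, which is $p$-free. All remaining cases go through \cref{lemma:assertionnormalization}.
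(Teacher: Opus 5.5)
Your proposal is correct and is exactly the straightforward unfolding of the definitions that the paper intends; the paper states this lemma without a separate proof. Because the target bound $0.5$ keeps the adjustment defined for $c\in[0,1]$, the judgement $\normassertiondefault(\phi \complementOp{\ineqOp} c) \gineqOp 0.5$ is equivalent to the negation of $\phi \ineqOp c$, so validity of its sequent together with a $p$-free $\gamma$ is precisely $\phi\ineqOp c\models\gamma$, and \cref{theorem:sequentcalccorrect} turns the provability clauses in the definition of $p$-free complement into conditions (b) and (c).
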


\noindent Next, we define the labels used in the \interpolationCalculus
in~\Cref{tab:seqinterpolant}. The idea is to have a sequent~$\Gamma$
on the left-hand side, which we aim to simplify, and a
judgement~$\alpha$ on the right-hand side, which we will ultimately
prove to be logically equivalent to~$\elim{p}(\Gamma)$.
	
\begin{defn}
  A \emph{\sequentlabelname} is an expression of the form
  \begin{equation*}
    \Gamma \Downarrow_p \alpha
  \end{equation*}
  where~$\Gamma$ is a sequent and~$\alpha$ is a $p$-free judgement.
\end{defn}

\noindent The intended meaning of a \sequentlabelname
$\Gamma \Downarrow_p \alpha$ is that~$\alpha \equiv
\elim{p}(\Gamma)$. As indicated above, \Cref{tab:seqinterpolant} shows
an \interpolationCalculus for \sequentlabelnameplural. In the
rules,~$q$ ranges over all atoms, including $p$, and empty
conjunctions denote the truth constant~$1$. Notice that the rules
introduce no new atoms. The rules of the \interpolationCalculus
exactly parallel the rules of the sequent calculus of
\Cref{sec:sequent}, and we apply the same terminology (propositional
rules, saturation etc.). Rule~$(\Box)$ uses the following additional
notation: We write $\lit{p}$ for the set of literals not
mentioning~$p$:
\begin{equation*}\textstyle
  \lit{p} = \bigcup_{q \in \atoms,\ q \neq p}\{q, \lnot q\}.
\end{equation*}
Since we mean to employ the rules to compute the
$p$-free complement of a single judgement
$\phi\gineqOp c$, we can assume that all judgements in the left-hand
side sequents of configurations occurring in the computation also have
the same comparison operator~$\gineqOp$ as the original judgement,
while the right hand sides employ the dual $\strictnessOp{\gineqOp}$
(as this property is an invariant of the rules); we thus assume this
format in the rules. In particular, a propositionally saturated
sequent~$\Gamma$ to which we apply rule~$(\Box)$ can then be written
in the form
\begin{equation*}
  \Gamma = \Gamma' \cup \{\Box \phi_i \gineqOp c_i \mid 1 \leq i \leq n\} \cup \{\Diamond \psi_i \gineqOp d_i \mid 1 \leq i \leq m\}
\end{equation*}
where every judgement in~$\Gamma'$ is of the form~$\phi \gineqOp c$
with~$\phi$ a literal, and $d_1 \leq \dotsb \leq d_m$.  For
$1 \leq i \leq n$ and $1 \leq j \leq m$, respectively, we then write
\begin{equation*}
  \boxsequents{\Gamma}_i \coloneq \{\phi_i \gineqOp c_i\} \cup \{\psi_k \gineqOp d_k \mid 1 - c_i \gineqOp d_k\}
  \quad\text{and}\quad
  \diamondsequents{\Gamma}_j \coloneq \{\psi_k \gineqOp d_k \mid 1 \leq k \leq j\}.
\end{equation*}
Finally, $\boxsequentsresnormalform{\Gamma}{i}$ and
$\diamondsequentsresnormalform{\Gamma}{j}$ are defined as the following
maps from judgements to formulae:
\begin{equation*}
  \boxsequentsresnormalform{\Gamma}{i} (\beta) \coloneq \normassertiondefault(\Diamond  \normassertion{\strictnessOp{\gineqOp}}{(1-c_i)} (\beta) \strictnessOp{\gineqOp} 1-c_i)
  \qquad
  \diamondsequentsresnormalform{\Gamma}{j} (\beta) \coloneq  \normassertiondefault(\Box  \normassertion{\strictnessOp{\gineqOp}}{(1-d_j)} (\beta) \strictnessOp{\gineqOp} 1-d_j)
\end{equation*}

\begin{table}
  \centering
  \begin{tabular}{c}
    \toprule
    \normalsize Rules of the \interpolationCalculus \\
    \midrule
    \addlinespace[1em]
    {\ruleSize
    $\displaystyle
    (\text{Ax }c) \ \dfrac{}{\Gamma,c \gineqOp d \Downarrow_p 0 \strictnessOp{\gineqOp} 0.5}\ (c \gineqOp d) 
    \qquad 
    (\text{Const}_\bot)\quad
    \dfrac{\Gamma \Downarrow_p \beta}
    {\Gamma,c \gineqOp d \Downarrow_p \beta}
    \quad(c \complementOp{\gineqOp} d)
    $
    }
    \\
    \addlinespace[1em]
    {\ruleSize
    $\displaystyle
    (\text{Ax }q) \ \dfrac{}{\Gamma, q \gineqOp c_1, \lnot q \gineqOp c_2 \Downarrow_p 0 \strictnessOp{\gineqOp} 0.5}\ (1 \gineqOp c_1 + c_2)
    $
    }
    \\
    \addlinespace[1em]
    {\ruleSize
    $\displaystyle
    (\text{Range}_\top) \ \dfrac{}{\Gamma, \phi \gineqOp c \Downarrow_p 0 \strictnessOp{\gineqOp} 0.5}\ (0 \gineqOp c)
    \qquad
    (\text{Range}_\bot) \ \dfrac{\Gamma \Downarrow_p \beta}{\Gamma, \phi \gineqOp c \Downarrow_p \beta}\ (c \strictnessOp{\gineqOp} 1)
    $
    }
    \\
    \addlinespace[1em]
    {\ruleSize
    $\displaystyle
    (\ominus) \ \dfrac{\Gamma, \phi \gineqOp c + d \Downarrow_p \beta}{\Gamma, \phi \ominus d \gineqOp c \Downarrow_p \beta}\ (c \strictnessOp{\gineqOp} 0)
    \qquad
    (\oplus) \ \dfrac{\Gamma, \phi \gineqOp c - d \Downarrow_p \beta}{\Gamma, \phi \oplus d \gineqOp c \Downarrow_p \beta}\ (1 \gineqOp c)
    $
    }
    \\
    \addlinespace[1em]
    {\ruleSize
    $\displaystyle
    (\lor) \  \dfrac{\Gamma, \phi \gineqOp c, \psi \gineqOp c \Downarrow_p \beta}{\Gamma, \phi \lor \psi \gineqOp c \Downarrow_p \beta} 
    \qquad 
    (\land) \  \dfrac{\Gamma, \phi \gineqOp c \Downarrow_p \beta_1 \qquad \Gamma, \psi \gineqOp c \Downarrow_p \beta_2}{\Gamma, \phi \land \psi \gineqOp c \Downarrow_p \normassertiondefault(\beta_1) \lor \normassertiondefault(\beta_2) \strictnessOp{\gineqOp} 0.5} 
    $
    }
    \\
    \addlinespace[1em]
    {\ruleSize
    $\displaystyle
    (\Box) \ \dfrac{
    \boxsequents{\Gamma}_i \Downarrow_p \boxsequents{\beta}_i \quad (1 \leq i \leq n)
    \qquad 
    \qquad 
    \diamondsequents{\Gamma}_j \Downarrow_p \diamondsequents{\beta}_j \quad (1 \leq j \leq m)
    }
    {
    \Gamma \Downarrow_p \bigwedge_{\psi \gineqOp c \in \Gamma, \psi \in \lit{p}} \normassertiondefault(\psi \complementOp{\gineqOp} c) 
    \land 
    \bigwedge_{1 \leq i \leq n} \boxsequentsresnormalform{\Gamma}{i}(\boxsequents{\beta}_i)
    \land
    \bigwedge_{1 \leq j \leq m} \diamondsequentsresnormalform{\Gamma}{j} (\diamondsequents{\beta}_j)
    \strictnessOp{\gineqOp} 0.5
    }
    $
    }
    \\
    {\ruleSize
    $\displaystyle
    (\Gamma \text{ is propositionally saturated})
    $
    }
    \\
    \addlinespace[0.5em]
    \bottomrule
  \end{tabular}
  \medskip
  
  \caption{\InterpolationCalculus for interpolant computation eliminating~$p$}
  \label{tab:seqinterpolant}
\end{table}

The rules of \Cref{tab:seqinterpolant} can be seen as a recursive
definition of a non-deterministic function computing the right hand
side of a \sequentlabelname, in a similar spirit as in work on
interpolation in two-valued modal logic~\cite{Bilkova07}; modulo
logical equivalence, the non-determinism has no impact on the
result. The presentation paralleling the sequent calculus helps
understand both this property and the definition as a whole --
effectively, we can see in each rule how `just enough truth' is added
to the target sequent to make it derivable.

A \emph{\sequenttreename} is a finite rooted tree whose nodes are
labelled by $p$-configurations. At each node, its label and the labels
of its children form an instance of a rule of
\Cref{tab:seqinterpolant}, with the node carrying the conclusion and
its children carrying the premises. In particular, leaves correspond
to rules with no premises. Correctness of the procedure is recorded as
follows:
	
\begin{theorem}\label{theorem:interpolationcalccorrect}
  Let~$\Gamma$ be a sequent all of whose judgements use the same
  comparison operator~$\gineqOp$.  Every \sequenttreename with root
  label $\Gamma\Downarrow_p\beta$ satisfies
  $\beta\equiv\elim{p}(\Gamma)$.
\end{theorem}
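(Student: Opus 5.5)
We prove the statement by induction on the computation tree. At each node with label $\Gamma\Downarrow_p\beta$ we show two things: (i) $\Gamma,\beta$ is provable, i.e.\ valid by \Cref{theorem:sequentcalccorrect}; and (ii) for every $p$-free judgement $\gamma$ with $\models\Gamma,\gamma$ we have $\beta\models\gamma$. Together these say $\beta\equiv\elim{p}(\Gamma)$. Throughout we work semantically, using soundness and completeness to pass freely between provability and validity. Soundness (i) is a direct check, while minimality (ii) will use invertibility of the rules. Equivalently, for (ii) it suffices to show that whenever a world $w$ of some model satisfies $\beta$, the $p$-free theory of $w$ is realized by a countermodel to $\Gamma$. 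Concretely, from $\mathcal M,w\models\beta$ and $\mathcal M,w\not\models\gamma$ we construct $\mathcal M',w'$ with $\mathcal M',w'\not\models\Gamma$ and $\mathcal M',w'\not\models\gamma$.

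The axioms and the propositional rules are routine. For axiom leaves, $\Gamma$ is already valid, and the label $0\strictnessOp{\gineqOp}0.5$ is unsatisfiable, hence the strongest possible judgement. For the single-premise rules $(\text{Const}_\bot)$, $(\text{Range}_\bot)$, $(\ominus)$, $(\oplus)$ and $(\lor)$, premise and conclusion sequents are semantically equivalent. This is exactly where the side conditions, which make the rules invertible, are used, so $\elim{p}$ is unchanged. For $(\land)$, $\Gamma,\phi\land\psi\gineqOp c$ fails at $w$ iff $\Gamma,\phi\gineqOp c$ fails or $\Gamma,\psi\gineqOp c$ fails. Hence $\elim{p}$ of the conclusion is the disjunction of the two premise complements: a $p$-free $\gamma$ completes the conclusion iff it completes both premises. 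By \Cref{lemma:assertionnormalization}, normalization turns this disjunction of judgements into the single judgement $\normassertiondefault(\beta_1)\lor\normassertiondefault(\beta_2)\strictnessOp{\gineqOp}0.5$. All judgements on the right carry operator $\strictnessOp{\gineqOp}$, so the normalizations are defined and share strictness.

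The main obstacle is the modal rule $(\Box)$ on a propositionally saturated $\Gamma$. For soundness we show that $\Gamma$ together with each conjunct's complement is valid.
\begin{itemize}
\item For literal conjuncts, $\psi\gineqOp c$ together with $\psi\complementOp{\gineqOp}c$ is trivially valid.
\item For the box conjunct $\boxsequentsresnormalform{\Gamma}{i}(\boxsequents{\beta}_i)$, unfold the adjustment. Falsity of $\Diamond\normassertion{\strictnessOp{\gineqOp}}{1-c_i}(\boxsequents{\beta}_i)\strictnessOp{\gineqOp}1-c_i$ means every successor of sufficient degree satisfies the complement of $\boxsequents{\beta}_i$. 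By induction these successors satisfy $\boxsequents{\Gamma}_i$, and the argument of the rule $(\Box)$ then yields $\Gamma$.
\item The diamond conjuncts are handled symmetrically with $\diamondsequents{\Gamma}_j$. Here one uses that a successor of degree above $1-d_j$ which refutes all $\psi_k\gineqOp d_k$ for $k\le j$ already refutes $\Diamond\psi_k\gineqOp d_k$ at the relevant thresholds. The ordering $d_1\le\dots\le d_m$ guarantees that the $\Diamond\psi_k$ with $k>j$ are killed by successors of degree at most $1-d_{j+1}$.
\end{itemize}

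Minimality is the hard part; we do it by a model construction in the style of Visser/Bílková. Suppose $\mathcal M,w$ satisfies $\beta$ but refutes a $p$-free $\gamma$, with $\gamma$ in normal form. We build a new root $w'$ as follows.
\begin{enumerate}
\item Copy the $p$-free literal values of $w$ and choose $p$ at $w'$ freely to refute the $p$-literals of $\Gamma$. This is possible since axiom $(\text{Ax }p)$ does not apply, by saturation.
\item For each $i$, satisfaction of the box conjunct gives a successor $v$ of $w$ with degree $u$ strictly (or weakly) above $1-c_i$ and $\mathcal M,v\models\boxsequents{\beta}_i$. By induction there is a model of $v$'s $p$-free theory refuting $\boxsequents{\Gamma}_i$, and we attach it to $w'$ with degree chosen near $1-c_i$, large enough for the box judgement yet small enough not to validate the diamonds excluded from $\boxsequents{\Gamma}_i$.
\item For each $j$, the diamond conjunct guarantees that all successors of $w$ with degree above $1-d_j$ satisfy $\diamondsequents{\beta}_j$. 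We replace each successor of $w$ by a bisimilar (on $p$-free formulae) copy refuting $\diamondsequents{\Gamma}_j$ for the largest applicable $j$, keeping its degree capped so that higher $\Diamond\psi_k$ stay false.
\end{enumerate}
Finally, we check that $w'$ agrees with $w$ on all $p$-free formulae up to the modal depth of $\gamma$, or at least on $\gamma$, so $w'$ still refutes $\gamma$, while $w'$ refutes every judgement in $\Gamma$. The delicate points are the choice of successor degrees and the strict/non-strict boundary cases of the side conditions $1\covcomp{\gineqOp}{\gineqOp_i}c+c_i$. If the direct model-theoretic route gets too fiddly, the fallback is a proof-theoretic argument: assuming $\Gamma,\gamma$ is derivable, analyse the last $(\Box)$ step of its derivation after making $\gamma$ propositionally saturated, and read off $\beta\models\gamma$ from the inductive hypotheses for the relevant $\boxsequents{\Gamma}_i$ and $\diamondsequents{\Gamma}_j$.
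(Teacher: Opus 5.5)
Your soundness half and the propositional cases, including the semantic treatment of $(\land)$, are fine. The minimality argument for $(\Box)$, which you rightly flag as the hard part, does not close as written.

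\textbf{The induction hypothesis does not give what steps~2 and~3 use.} There you invoke ``by induction there is a model of $v$'s $p$-free theory refuting $\boxsequents{\Gamma}_i$''. But your induction hypothesis is (i)+(ii), which speaks only about single $p$-free judgements. Realizing the whole $p$-free theory of $v$ would need compactness, and compactness fails once strict judgements are involved: every finite subset of $\{q>0\}\cup\{q\leq 1/n\mid n\geq 1\}$ is satisfiable, but the whole set is not.

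The missing idea is a bridge from (ii) to a finitary model-theoretic statement, and it is available. Suppose $v\models\boxsequents{\beta}_i$ and $\Phi$ is a finite set of $p$-free formulae; take the arguments of the outermost modal subformulae of $\gamma$. Then some world $v'$ refuting $\boxsequents{\Gamma}_i$ has $V(\chi,v')=V(\chi,v)$ for all $\chi\in\Phi$. Otherwise the sequent $\boxsequents{\Gamma}_i\cup\{\chi<V(\chi,v),\ \chi>V(\chi,v)\mid\chi\in\Phi\}$ is valid. The added judgements are all strict, so by \Cref{lemma:assertionnormalization} they combine (normalize, then take~$\lor$) into a single strict $p$-free judgement~$\delta$. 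Then (ii) yields $\boxsequents{\beta}_i\models\delta$, contradicting $v\models\boxsequents{\beta}_i$ and $v\not\models\delta$. The same bridge is needed for the $\diamondsequents{\Gamma}_j$.

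\textbf{Two further steps fail as stated.} First, in arbitrary models suprema need not be attained.
For $\gineqOp={>}$ the box conjunct amounts to $\Diamond\normassertion{\geq}{1-c_i}(\boxsequents{\beta}_i)\geq 1-c_i$. This need not provide any successor of degree $\geq 1-c_i$ satisfying $\boxsequents{\beta}_i$.
Moreover, refuting $\psi_k\gineqOp d_k$ successor by successor does not keep $\Diamond\psi_k\gineqOp d_k$ false at~$w'$. For $\Gamma=\{\Diamond p\geq 0.5\}$, untouched successors of degree $0.5-1/n$ with $p=1$ give $\Diamond p=0.5$ at~$w'$.
So you must first pass, via \cref{lemma:finite-realization}, to a finite tree model preserving the values of $\beta$, $\gamma$ and their subformulae.

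Second, the thresholds in step~3, and in your soundness bullet for diamonds, are wrong. The $j$-th diamond conjunct amounts to $\Box\normassertion{\strictnessOp{\gineqOp}}{1-d_j}(\diamondsequents{\beta}_j)\strictnessOp{\gineqOp}1-d_j$. This forces $\diamondsequents{\beta}_j$ exactly at successors whose degree $r$ satisfies $r\gineqOp d_j$, not $r>1-d_j$. The repairs are:
\begin{enumerate}
\item Assign each successor the largest such~$j$.
\item Keep all original degrees unchanged; capping them could change $p$-free values at~$w'$.
\item Attach each box witness with a degree that refutes $\Box\phi_i\gineqOp c_i$, enables none of the excluded diamonds, and does not exceed the degree of the original witness. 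The last condition makes the new successor dominated by the (replaced) original one, which is what gives agreement of $w'$ with $w$ on~$\gamma$.
\end{enumerate}
For soundness of the diamond conjuncts, simply apply the sequent rule~$(\Box)$ to the conjunct itself. Its eligible diamonds are those with $d_k\leq d_j$, which include $\diamondsequents{\Gamma}_j$.

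\textbf{Comparison with the paper.} With these repairs your Visser-style construction is a genuine alternative to the paper's proof, which establishes minimality proof-theoretically. It replays a derivation of $\Gamma,\gamma$ as one of $\lnot\beta,\gamma$, redirecting $(\Box)$ steps on $\Box\phi_i\gineqOp c_i$ to $\Box\normassertion{\complementOp{\gineqOp}}{c_i}(\boxsequents{\beta}_i)\gineqOp c_i$. It uses that the selected diamonds of~$\Gamma$ form a prefix $\diamondsequents{\Gamma}_j$, and applies (ii) to the single judgement obtained from the $\gamma$-part of the premise; this is essentially your fallback. Your semantic route yields a bisimulation-style description of $\elim{p}(\Gamma)$ and avoids normalizing derivations, at the price of the bridge and the finite-model reduction.
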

\noindent By \cref{lem:complement-interpolant}, we immediately obtain
\begin{corollary}[Uniform interpolation for judgements]
  Threshold judgements in non-expansive fuzzy modal logic have
  post-interpolants.
\end{corollary}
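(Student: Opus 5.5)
The corollary will follow by chaining \cref{theorem:interpolationcalccorrect} with \cref{lem:complement-interpolant}; the only real work is to check that the hypotheses of both results can always be met. Fix a judgement $\phi\ineqOp c$ and an atom~$p$.

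\textbf{Step 1: trivial bounds.} If $c\notin\unitInterval$, the judgement is valid or unsatisfiable, as noted before \cref{lem:complement-interpolant}. In that case I take the constant judgement $1\geq 1$ or $0>0.5$ as post-interpolant, respectively; both are $p$-free and trivially satisfy the three conditions. So assume $c\in\unitInterval$. By \cref{lemma:nnf} we may also assume $\phi$ is in negation normal form; the same then holds for the negation normal form of~$\neg\phi$ that enters the normalization.

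\textbf{Step 2: existence of a computation tree.} Form the single-judgement sequent $\Gamma=\{\normassertiondefault(\phi\complementOp{\ineqOp}c)\gineqOp 0.5\}$. By the construction of \cref{lemma:assertionnormalization}, with target bound~$0.5$, which is neither $0$ nor~$1$ (\cref{remark:adjustmentConstructionUndefined}), the normalization is always defined. Its comparison operator $\gineqOpSymbol\in\gineqOpSet$ is uniform over~$\Gamma$. I then need to show that some computation tree with root $\Gamma\Downarrow_p\beta$ exists. I would do this by induction on a measure, e.g.\ the multiset of formula sizes in the sequent. Each propositional rule strictly decreases the measure, and $(\Box)$ passes to premises whose formulae are proper subformulae. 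At any node, either some propositional rule applies, or the sequent is propositionally saturated and $(\Box)$ applies. Note that $(\Box)$ with $n=m=0$ is a leaf rule, producing just the literal conjunction. The adjustments $\normassertion{\strictnessOp{\gineqOp}}{1-c_i}$ inside $\boxsequentsresnormalform{\Gamma}{i}$ and $\diamondsequentsresnormalform{\Gamma}{j}$ must be defined. I expect this to hold because, after saturation, $(\text{Range}_\top)$ and $(\text{Range}_\bot)$ have removed bounds $c_i$ and $d_j$ at the trivial extremes, so $1-c_i$ and $1-d_j$ avoid the undefined cases of \cref{remark:adjustmentConstructionUndefined}.

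\textbf{Step 3: conclude.} \Cref{theorem:interpolationcalccorrect} gives $\beta\equiv\elim{p}(\Gamma)$, so the $p$-free complement exists. \Cref{lem:complement-interpolant} then says $\beta$ is a post-interpolant of $\phi\ineqOp c$ w.r.t.~$p$. The rules introduce no new atoms and $\beta$ is $p$-free by the definition of a $p$-configuration, so the vocabulary condition holds.

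The main obstacle is the totality argument in Step 2, specifically the definedness of the adjustments in rule $(\Box)$ at edge bounds. If the argument via $(\text{Range}_\top)$ and $(\text{Range}_\bot)$ does not cover every case, I would handle the remaining degenerate judgements directly: their contributions to $\beta$ are the constant judgements $0\strictnessOp{\gineqOp}0.5$ (valid) or $1$ (neutral in the conjunction), which is sound by \cref{remark:adjustmentConstructionUndefined}.
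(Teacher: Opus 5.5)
Your proposal is correct and takes the same route as the paper, which obtains the corollary immediately from \Cref{theorem:interpolationcalccorrect} and \cref{lem:complement-interpolant}. Your checks of the trivial bounds, of the existence of a computation tree, and of definedness of the adjustments in $(\Box)$ (saturation leaves only bounds in $(0,1]$ for $\geq$ and $[0,1)$ for $>$) spell out what the paper leaves implicit; one slip is that in your unneeded fallback remark $0\strictnessOp{\gineqOp}0.5$ is unsatisfiable, not valid.
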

\noindent It is immediately clear from the rules that the construction does not increase
modal depth. As a consequence, we also have uniform interpolants for threshold judgements in the propositional fragment of non-expansive fuzzy $\ALC$.

We next estimate the size of interpolants as follows:
\begin{theorem}
  Each threshold judgement in non-expansive fuzzy modal logic has a
  post-interpolant of at most exponential size.
\end{theorem}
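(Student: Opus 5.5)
By \cref{lem:complement-interpolant} and \cref{theorem:interpolationcalccorrect}, a post-interpolant of $\phi\ineqOp c$ is the right-hand side $\beta$ of the root label of any \sequenttreename with root $\{\normassertiondefault(\phi\complementOp{\ineqOp}c)\gineqOp 0.5\}\Downarrow_p\beta$. So it suffices to choose a particular \sequenttreename and bound the size of the formula it computes at the root. The choice I make is to apply the modal rule $(\Box)$ only to propositionally saturated sequents, as the calculus requires, and to apply propositional rules eagerly, preferring the non-branching ones.

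The argument has three parts.

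\textbf{Shape of the tree.} First I bound the tree itself. Every judgement occurring anywhere in the tree has as its formula a subformula of the normalized input formula $\phi_0$, where $|\phi_0|\le|\phi|+O(1)$ by \cref{lemma:nnf} and the bound after \cref{lemma:assertionnormalization}. Only the numerical thresholds vary, and those do not count toward syntactic size.

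Within a propositional phase, each rule replaces a judgement by judgements on strictly smaller formulae. Hence each branch of a propositional phase has length at most $n=|\phi_0|$. The premise sequents of a $(\Box)$ application, $\boxsequents{\Gamma}_i$ and $\diamondsequents{\Gamma}_j$, consist of immediate subformulae of modalized formulae. The modal depth therefore drops by one, and there are at most $n$ modal phases along any branch.

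More carefully, I use the measure $\mu(\Gamma)=\sum_{\alpha\in\Gamma}|\alpha|$. Every propositional rule strictly decreases $\mu$ (for $(\land)$, each premise separately does). Every premise of $(\Box)$ has $\mu$ at most that of the conclusion minus one. So every root-to-leaf path has length at most $\mu$ of the root, which is $O(n)$.

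The branching at a node is at most $2$ for $(\land)$ and at most $n$ for $(\Box)$, since $n$ bounds the number of modalized judgements in the sequent. The tree therefore has at most $n^{O(n)}=2^{O(n\log n)}$ nodes.

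\textbf{Size of the computed formula.} Next I bound the output size bottom-up. Let $s(v)$ be the size of the formula in the judgement at node $v$. Leaves give constant size. Non-branching rules copy $\beta$. For $(\land)$ we get $s(v)\le s(v_1)+s(v_2)+O(1)$, because normalization adds $O(1)$.

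For $(\Box)$ we get $s(v)\le O(n)+\sum_{\text{children}}(s(v_k)+O(1))$. This holds because the literal part contributes at most $O(n)$ and each $\boxsequentsresnormalform{\Gamma}{i}$ and $\diamondsequentsresnormalform{\Gamma}{j}$ wraps its argument with a constant number of symbols, again by the remark after \cref{lemma:assertionnormalization}.

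Summing, $s(\text{root})\le O(n)\cdot(\text{number of nodes})$. This already gives $2^{O(n\log n)}$, which is exponential.

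\textbf{The main obstacle: the clean bound $2^{O(n)}$.} To match the classical bound I would refine the count of leaves. I would prove by induction on $\mu$ that the number of leaves below a node labelled $\Gamma$ is at most $2^{\mu(\Gamma)}$.

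For $(\land)$ this works because the two premises together have measure at most $2\mu(\Gamma)$, and each has $\mu\le\mu(\Gamma)-1$.

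For $(\Box)$ the children $\boxsequents{\Gamma}_i$ and $\diamondsequents{\Gamma}_j$ share the diamond judgements, so their measures do not simply add up. The plan is to observe that each child is determined by a subset of the modalized judgements and has measure at most $\mu(\Gamma)-1$. One then needs $\sum_k 2^{\mu_k}\le 2^{\mu(\Gamma)}$.

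I expect this inequality to be the delicate point. The diamond prefixes $\diamondsequents{\Gamma}_j$ are nested, which gives a geometric-like sum, but it can fail in general when many boxes share the same large set of diamonds. If it fails, I would fall back on bounding distinct subproblems instead. The subtrees depend only on the sequent, and there are only exponentially many relevant sequents once thresholds are fixed along a branch. Sharing equal subresults then makes the interpolant's DAG size exponential. Alternatively I would simply accept the bound $2^{O(n\log n)}$, which still establishes the theorem as stated.
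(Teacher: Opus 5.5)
Your skeleton matches the paper's proof. Strict decrease of the total formula size bounds branch length by $O(N)$, and formula sizes add across branching rules with a local $O(N)$ overhead, so the root formula has size $O(N)$ times the number of nodes. Where you fall short is the leaf count. Your branching-times-depth count gives only $n^{O(n)}=2^{O(n\log n)}$. That is exponential in the loose $2^{n^{O(1)}}$ sense, but weaker than the intended $2^{O(N)}$, which matches the classical bound and is stated explicitly for the pointwise interpolants later in the paper.

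\textbf{Measure.} With judgement size $|\phi|+2$, your $\mu$ is \emph{not} decreased by $(\lor)$. That rule replaces a judgement of size $|\phi|+|\psi|+3$ by two judgements of total size $|\phi|+|\psi|+4$. Use the sum of formula sizes $\sum_{\phi\gineqOp c\in\Gamma}|\phi|$ instead, as the paper does; every rule then strictly decreases it. Also, \cref{lemma:nnf} gives only $|\phi_0|=O(|\phi|)$, not $|\phi|+O(1)$, which is harmless.

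\textbf{The inequality you flagged holds.} You used only $\mu_k\le\mu(\Gamma)-1$ for the premises of $(\Box)$. In fact every premise loses at least $k=n+m$, the total number of modalized judgements.
In $\boxsequents{\Gamma}_i$, the principal box and each selected diamond lose their modality, which costs one each. Every other box and every unselected diamond is discarded, which costs at least two each.
In $\diamondsequents{\Gamma}_j$, all boxes are discarded and every diamond is stripped or discarded.
So many boxes sharing a large set of diamonds is harmless: each box premise throws away all the other boxes.
Since $(\Box)$ has exactly $k$ premises, $\sum_k 2^{\mu_k}\le k\,2^{\mu(\Gamma)-k}\le 2^{\mu(\Gamma)}$. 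The case $(\land)$ works as you said.
This gives $L\le 2^N$ leaves, $O(LN)$ nodes, and hence size $O(LN^2)=2^{O(N)}$. This is the counting behind the paper's assertion that $L$ is exponential in $N$. The paper carries it out explicitly in the size bound for pointwise interpolants, where the modal clause makes $n(m+1)+m<2^{n+m}$ calls, each shrinking the size by at least $n+m$.

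\textbf{Fallback.} Drop the DAG fallback. Sharing equal subresults bounds DAG size, not the syntax-tree size the theorem is about.
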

\begin{proof}
  Let~$\alpha$ be a judgement.  We preprocess the logical complement of $\alpha$ by bringing its normalization into negation normal form, as in \Cref{sec:sequent}, obtaining an equivalent judgement
  $\alpha_0$ of size $O(\lvert\alpha\rvert)$, and put
  $N=\lvert\alpha_0\rvert$. At each leaf of a \sequenttreename, the
  right-hand side has size $O(N)$. Non-branching, non-modal rules do
  not increase the size of the right-hand side. Every other rule
  increases the size of the right-hand side by at most the sum of the
  sizes of the right-hand sides of its premises plus an overhead of
  $O(N)$.  This includes the literal conjuncts introduced by modal
  rules.  Every branch has length $O(N)$, since each recursive step
  strictly decreases the total syntactic size of the formulae in the
  left-hand sequent. Hence, a computation tree with $L$ leaves has
  $O(LN)$ nodes. Summing the local overheads and the sizes at the
  leaves therefore bounds the size of the right-hand side at the root
  by $O(LN^2)$.  Since $L$ is at most exponential in $N$, and
  $N=O(\lvert\alpha\rvert)$, the resulting interpolant has at most
  exponential size in $\lvert\alpha\rvert$.
	\end{proof}
	
	\begin{remark}[Numerical parameters and size]
		\label{remark:numerical-parameters-size}
		Thus far, we have used the syntactic size of formulae to formulate bounds on the size of interpolants. If we instead measure the size of formulae by their binary encoding lengths, we need to restrict ourselves to rational parameters, rather than arbitrary real numbers.
		We then represent these rationals by their numerators and denominators in binary.
		It is also easy to see that our construction
		preserves this rationality: new numerical parameters are obtained
		from the input parameters and the fixed constants $0$, $1/2$,
		and $1$ by addition, subtraction, minimum, and maximum. More precisely, if $D$ is the least common multiple of $2$ and all the denominators of the input, we can represent every fraction just in terms of its numerator with respect to this denominator. Addition, subtraction, minimum and maximum operations still yield rationals that can be represented in the same way. The binary encoding length of constants is thus $O(\log D+1)$, which is equal to $O(B+1)$, where $B$ is the total binary encoding length of the numerical parameters in the input.
		As a result, counting binary encodings introduces at most
		an additional factor of $O(B+1)$ in the structural size bound.
		In particular, the exponential bound on the size of interpolants remains
		exponential when measured in the total binary encoding length
		of the input.
	\end{remark}

        \section{Uniform Interpolation}
	\label{sec:pointwise-interpolation}
        As announced, we now strengthen the results of
        \cref{sec:uniform-interpolation} from uniform interpolation of
        judgements to uniform interpolation of formulae, thus
        establishing what one would usually understand as the uniform
        interpolation property. Following notation already introduced
        for the propositional case (\cref{sec:prelim}), we write
        $\models\phi\leq\psi$ for formulae $\phi$ and $\psi$ if
        $V_{\mathcal M}(\phi,w)\leq V_{\mathcal M}(\psi,w)$ at every
        world~$w$ of every model~$\mathcal M$.
	
	\begin{theorem}[Uniform interpolation]
		\label{theorem:pointwise-interpolation}
		For every formula~$\phi$ and atom~$p$, there is a formula
		$\exists p.\,\phi$ such that
		\begin{enumerate}
			\item $\voc(\exists p.\,\phi)\subseteq\voc(\phi)\setminus\{p\}$;
			\item $\models\phi\leq\exists p.\,\phi$; and
			\item for every $p$-free formula~$\psi$,
			\begin{equation*}
				\models\phi\leq\psi
				\quad\Longrightarrow\quad
				\models\exists p.\,\phi\leq\psi.
			\end{equation*}
		\end{enumerate}
	\end{theorem}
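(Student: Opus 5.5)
Our plan is to reduce uniform interpolation of formulae to the judgement-level construction of \cref{sec:uniform-interpolation} by making the threshold symbolic. The guiding idea is that $\models\phi\leq\psi$ holds iff for every $c\in\unitInterval$ we have $\phi\geq c\models\psi\geq c$. Hence a natural candidate is the pointwise supremum over thresholds,
\begin{equation*}
  V(\exists p.\,\phi,w)=\sup\{c\in\unitInterval\mid \mathcal M,w\models\exists p.\,(\phi\geq c)\},
\end{equation*}
and the task is to show that this supremum is expressible by a single $p$-free formula of the logic. Once expressibility is settled, the three clauses follow directly. Clause~(2) holds because $\phi\geq c\models\exists p.(\phi\geq c)$ with $c=V(\phi,w)$. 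Clause~(3) holds because $\phi\ge c\models\psi\ge c$ gives $\exists p.(\phi\ge c)\models\psi\ge c$ by the post-interpolant property, for every~$c$. Clause~(1) holds because the rules introduce no new atoms.

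To get expressibility, we would rerun the \interpolationCalculus of \Cref{tab:seqinterpolant} on the symbolic root sequent $\{\theta\gineqOp 1-x\}$, where $\theta$ is the NNF of $\neg\phi$ and $x$ is a parameter. The bounds occurring in the tree are then of the form $a\pm x$, or constants, since rules only add or subtract constants. Side conditions such as $c\strictnessOp{\gineqOp}0$, $1\gineqOp c$, the $(\Box)$ selection $1-c_i\gineqOp d_k$ and the axiom conditions become linear comparisons in~$x$. Therefore $\unitInterval$ splits into finitely many intervals (at most exponentially many breakpoints), and on each interval the computation tree has a fixed shape. On such an interval, every adjustment $\normassertiondefault(\cdot)$ produced by the calculus is a truncated shift $\shiftformula{t}{\cdot}$ with $t=a\pm x$.

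The key observation is that in the resulting $p$-free complement, read as a formula with the value $x$ kept free, the dependence on $x$ is uniform. The outcome should be a formula $\chi$ (after taking the logical complement as in \cref{lem:complement-interpolant}) such that, for $x$ in the interval,
\begin{equation*}
  \exists p.(\phi\ge x)\equiv \chi\geq x \quad\text{or}\quad \chi\ge \text{const}.
\end{equation*}
The reason is that shifts by $x$ commute through $\land,\lor$ and through $\Diamond,\Box$ by monotonicity; we would prove this by induction along the computation tree, piece by piece. We would then set
\begin{equation*}
  \exists p.\,\phi=\bigvee_{\text{pieces }[a,b]}\bigl(\text{clip of }\chi_{[a,b]}\text{ to }[a,b]\bigr),
\end{equation*}
where clipping uses $\land b$, $\lor a$, and $\ominus$/$\oplus$ to handle the endpoints.

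The main obstacle is the second step: showing that within a piece the symbolic interpolant really has the form $\chi\ge x$ with $\chi$ independent of $x$. This requires the truncations in $\shiftformula{t}{\cdot}$ not to switch cases inside a piece, which we would enforce by including all case boundaries of $S$ among the breakpoints. It also requires that the modal adjustments $\boxsequentsresnormalform{\Gamma}{i}$ and $\diamondsequentsresnormalform{\Gamma}{j}$ shift bounds by exactly the same amount $x$. If strict and non-strict variants complicate the supremum at breakpoints, we would instead take sup over non-strict thresholds only and use that $\exists p.(\phi\ge c)$ is anti-monotone and left-continuous in~$c$, which follows from the equivalences. Size remains exponential because there are exponentially many pieces, each of exponential size.
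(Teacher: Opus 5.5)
Your reduction to thresholds is sound. Given a $p$-free $\chi$ with $\chi\ge x\equiv\exists p.(\phi\ge x)$ for all $x\in(0,1]$, clauses (1)--(3) follow as you say, and this is also how the paper concludes. Your per-piece claim is essentially right, and it is not where the difficulty lies. Once you renormalize to threshold $x$ instead of $0.5$, the modal steps yield $\shiftformula{b-1}{\Diamond\shiftformula{1-b}{\cdot}}$ and $\shiftformula{b-1}{\Box\shiftformula{1-b}{\cdot}}$, which do not depend on $x$. This is exactly the shape of the paper's $\boxsequentsresnormalformsymbolic{\Gamma}{i,j}$ and $\diamondsequentsresnormalformsymbolic{\Gamma}{j}$, justified by \Cref{lemma:pointwise-modal-adjustment}.

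The genuine gap is the gluing. A piece formula $\chi_P$ is only determined at thresholds inside $P$, and your $\bigvee_P((\chi_P\lor a)\land b)$ is actually wrong. Every disjunct is at least its left endpoint, so the result never drops below the left endpoint of the last piece. Take the $p$-free formula $\phi=q\land 0.5$. The root $\{\lnot q\lor 0.5>1-x\}$ splits at $x=0.5$: on $(0,0.5]$ you get $\chi=q$, and on $(0.5,1]$, where $(\text{Ax }c)$ fires, you get $\chi=0$. Your gluing returns the constant $0.5$, which violates clause~(3) for $\psi=\phi$.

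Dropping the lower clip does not close the gap by itself. For $\bigvee_P(\chi_P\land b_P)$ to be correct, pieces further right must not overshoot at worlds whose true value lies further left. For instance, it would suffice that $\models\chi_{P'}\le\chi_P$ whenever $P'$ lies to the right of $P$. This does not follow from the per-piece equivalences.

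In fact, no propositional gluing $G(\chi_1,\chi_2)$ works for arbitrary formulae that are correct only on their own pieces $[0,a]$ and $[a,1]$. Take $\phi=q$, $\chi_1=q\land a$ and $\chi_2=q\lor(r\land(a-\epsilon))$; both are correct on their pieces. Then $G$ would have to send $(a-s,a-\epsilon)$ to $a-s$ and $(a,a+s)$ to $a+s$. For $\epsilon<s$ the inputs are at sup-distance $s+\epsilon$ but the outputs at distance $2s$, contradicting non-expansiveness.

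So you would need a structural monotonicity statement about the specific formulae your computation produces. That would require an induction along the recursion using the strongest-complement property. Your claimed left-continuity of $c\mapsto\exists p.(\phi\ge c)$ has the same status: it concerns how adjacent pieces fit together and is not implied by the per-piece facts.

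The paper avoids global gluing entirely. It resolves each side condition locally in the recursion for $U(\Gamma)$ by conjoining or disjoining truth constants $\clip{t}$. It then uses the monotonicity \eqref{eq:pointwise-monotonicity}, available from the induction hypothesis, in two places: to handle the $\oplus$ clause, and to ensure that only the maximal eligible diamond prefix matters in the modal clause. The result is a single $A$-independent formula satisfying \eqref{eq:invariantproperty} for all $A\in(0,1]$.
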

	
	\begin{remark}
		Note that \Cref{theorem:pointwise-interpolation}, once again, directly implies that non-expansive fuzzy $\ALC$ also has Craig-interpolants with respect to strong consequence, i.e. for formulae $\phi, \psi$ such that $\models \phi \leq \psi$, we obtain a Craig interpolant~$\chi$, that is, $\voc(\chi) \subseteq \voc(\phi) \cap \voc(\psi)$ and both $\models \phi \leq \chi$ and $\models \chi \leq \psi$.
	\end{remark}
	
	By duality, the formula
        $\forall p.\,\phi\coloneq\lnot\exists p.\,(\lnot\phi)$ is a
        greatest $p$-free formula below~$\phi$ in the pointwise order;
        that is, $\models\forall p.\,\phi\leq\phi$, and
        $\models\psi\leq\phi$ implies
        $\models\psi\leq\forall p.\,\phi$ for every $p$-free
        formula~$\psi$.
	
	To prove \Cref{theorem:pointwise-interpolation}, we need to abstract the \interpolationCalculus to a construction extending to symbolic thresholds. We fix the atom~$p$ as the atom we want to eliminate. We introduce a variable~$A$ serving as a placeholder for real numbers, and allow arithmetical expressions built from~$A$ on the right-hand side of judgements; in fact, only~$A$ itself and expressions $b-A$ where $b\in\mathbb R$ will occur. Such expressions evaluate to a real number when a value for~$A$ is given; we will not distinguish between an expression and its value in this sense.
	
        A \emph{\symbolicSequent} is a sequent~$\Gamma$ consisting of judgements of the form $\chi > b$, where $\chi$ is in negation normal form and $b \in \mathbb{R}$. We then write
	\begin{equation*}
		\Gamma - A \coloneq \{\chi > b - A \mid \chi > b \in \Gamma\}.
	\end{equation*}
	We now give a recursive construction of a formula~$U(\Gamma)$ obeying the invariant
	\begin{equation}\label{eq:invariantproperty}
		U(\Gamma) \geq A \equiv \elim{p}(\Gamma - A) \qquad \text{for all $0 < A \leq 1$}.
	\end{equation}
	The construction builds on the \interpolationCalculus of
        \cref{sec:uniform-interpolation} by abstracting it to symbolic
        thresholds.  Side conditions on the threshold~$A$ are encoded
        using conjunctions and disjunctions with truth constants, and
        instead of normalizing assertions to a fixed bound as in
        rules~$(\land)$ and~$(\Box)$ of \cref{tab:seqinterpolant}, we
        use truncated shifts chosen in such a way as to ensure that
        the resulting formula~$U(\Gamma)$ is independent of~$A$.  The
        symbolic threshold~$A$ thus appears only in the judgement
        $U(\Gamma)\geq A$, which ultimately represents the
        corresponding $p$-complement at each threshold.

	For $t\in\mathbb R$, we write $\clip{t}=\min(1,\max(0,t))$ for both the real number and the corresponding truth constant. The propositional cases except the literals are dealt with as follows:
	\begin{align*}
		U(\Gamma, c > b) &\coloneq U(\Gamma) \land \clip{b-c}\\
		U(\Gamma, \chi \lor \eta > b) &\coloneq U(\Gamma, \chi > b, \eta > b)\\
		U(\Gamma, \chi \land \eta > b) &\coloneq U(\Gamma, \chi > b) \lor U(\Gamma, \eta > b)\\
		U(\Gamma, \chi \ominus c > b) &\coloneq U(\Gamma, \chi > b+c) \land \clip{b}\\
		U(\Gamma,\chi\oplus c>b) &\coloneq U(\Gamma,\chi>b-c) \lor(U(\Gamma)\land\clip{b-1})
	\end{align*}
	For a \symbolicSequent~$\Gamma$ to which the above clauses have been exhaustively applied, and which thus contains only literals and modalized formulae as the left-hand sides of its judgements, we need some additional notation:
	\begin{equation*}
		C_\Gamma \coloneq \bigwedge_{\chi > b \in \Gamma} \clip{b} \land \bigwedge_{\substack{p > b \in \Gamma\\ \lnot p > d \in \Gamma}}
		\clip{\frac{b+d-1}{2}}, \qquad L_\Gamma \coloneq \bigwedge_{\substack{\ell > b \in \Gamma\\ \ell \in \lit{p}}} \shiftformula{b-1}{\lnot \ell}.
              \end{equation*}
              (with truncated shift~$\shiftformula{t}{\phi}$ defined as per \cref{normalforms}).
	These define the constant and literal parts of $U(\Gamma)$. Now, let the judgements in~$\Gamma$ with modalized formulae on the left hand side be enumerated as 
	\begin{equation*}
		\Box \phi_i > b_i\text{ for $1 \leq i \leq n$}\quad\text{and}\quad\Diamond \psi_j > d_j\text{ for $1 \leq j \leq m$ where $d_1 \leq \ldots \leq d_m$.}
	\end{equation*}
	Then we complete the definition of $U(\Gamma)$ by
	\begin{align*}
		\diamondsequentsresnormalformsymbolic{\Gamma}{j} &=  \shiftformula{d_j-1}{\Box\shiftformula{1-d_j}{U(\{\psi_k > d_k \mid 1 \leq k \leq j\})}}\\
		\boxsequentsresnormalformsymbolic{\Gamma}{i,j} & = \shiftformula{b_i-1}{\Diamond \shiftformula{1-b_i}{U(\{\phi_i > b_i\}\cup\{\psi_k > d_k \mid 1 \leq k \leq j\})}}\\
		U(\Gamma) & = C_\Gamma \land L_\Gamma \land \bigwedge_{i=1}^n \bigwedge_{j=0}^m \left(\boxsequentsresnormalformsymbolic{\Gamma}{i,j} \lor \clip{\frac{b_i + d_j -1}{2}}\right) \land \bigwedge_{j=1}^m \diamondsequentsresnormalformsymbolic{\Gamma}{j}
        \end{align*}
        where we put $d_0 = 0$ and $U(\emptyset)=1$.  Note that if
        there are no judgements with modalized formulae on the left,
        then the above clause for $U(\Gamma)$ does not contain any
        recursive calls.  Furthermore, every recursive call strictly
        decreases the syntactic size of~$\Gamma$, so the recursion
        terminates. We can also immediately see that no new atoms are
        being introduced and that the resulting formula is $p$-free.
        
        The main difficulty of the proof of \Cref{theorem:pointwise-interpolation} is showing the invariant~\eqref{eq:invariantproperty} holds. The idea is to do induction over the total syntactic size of formulae in~$\Gamma$ and compare $U(\Gamma)$ with the output of the \interpolationCalculus on $\Gamma - A$ for a symbolic threshold $A$. The details can be found in the appendix as \Cref{lemma:pointwise-invariant}.
	
	\begin{proof}[Proof of \Cref{theorem:pointwise-interpolation}]
		For a formula~$\phi$ with $\theta$ as a negation normal form of $\lnot \phi$, we put
		\begin{equation*}
			\exists p.\,\phi = U(\{\theta > 1\}).
		\end{equation*}
		Then for $A \in (0,1]$, we have $(\exists p.\,\phi) \geq A \equiv \elim{p}(\{\theta > 1 - A\})$, and $\theta > 1 - A$ is the logical complement of $\phi \geq A$. Therefore $(\exists p.\,\phi) \geq A$ is the post-interpolant of $\phi \geq A$. In particular,
		\begin{equation*}
			\phi \geq A \models (\exists p.\,\phi) \geq A.
		\end{equation*}
		And, since this holds for every positive threshold, we have $\models \phi \leq \exists p.\,\phi$. Similarly, for a $p$-free $\psi$ with $\models \phi \leq \psi$ we have $\phi \geq A \models \psi \geq A$ for every positive $A$. The post-interpolant property gives
		\begin{equation*}
			(\exists p.\,\phi) \geq A \models \psi \geq A.
		\end{equation*}
		We thus immediately have $\models \exists p.\,\phi \leq \psi$.
	\end{proof}
	\noindent As with uniform interpolation for threshold judgements, the construction of pointwise uniform interpolants
	does not increase modal depth. We thus once again obtain pointwise uniform interpolation for the propositional fragment of non-expansive fuzzy $\ALC$.

	\begin{theorem}
		The pointwise uniform interpolants $\exists p.\,\phi$ and
		$\forall p.\,\phi$ have syntactic size~$2^{O(|\phi|)}$.
	\end{theorem}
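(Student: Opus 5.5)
We bound the size of $U(\Gamma)$ for the recursion of \cref{sec:pointwise-interpolation}, applied to $\Gamma=\{\theta>1\}$ with $\theta$ the negation normal form of $\lnot\phi$. Put $N=\lvert\theta\rvert+2=O(\lvert\phi\rvert)$ by \cref{lemma:nnf}. The plan mirrors the size proof for threshold judgements: view the recursion as a tree of calls, bound the number of leaves and the depth, and bound the local overhead per call.

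\textbf{Depth and overhead.} Every recursive call of $U$ is on a strict sequent of strictly smaller total syntactic size, so every branch of the call tree has length at most $N$. Every sequent arising in the tree has size at most $N$, since each clause only removes or decomposes formulae and modal steps pass to subsets of the unboxed arguments. We then bound the formula built at a call on $\Gamma$, excluding the subformulae $U(\cdot)$ of its recursive calls.
\begin{itemize}
\item The propositional clauses contribute $O(1)$ each.
\item $C_\Gamma$ and $L_\Gamma$ have size $O(N^2)$: the $p$/$\lnot p$ pairs are at most quadratic, and everything else is linear.
\item The modal clause contributes $O(1)$ per conjunct $\boxsequentsresnormalformsymbolic{\Gamma}{i,j}\lor\clip{\cdot}$ and $\diamondsequentsresnormalformsymbolic{\Gamma}{j}$. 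With $n,m\le N$, this totals $O(N^2)$.
\end{itemize}
So each node has overhead $O(N^2)$.

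\textbf{Counting nodes.} The main obstacle is bounding the number of calls. Unlike the \interpolationCalculus, the modal clause does not make just $n+m$ calls. It makes one call for each pair $(i,j)$ with $0\le j\le m$, and there are also calls $\diamondsequentsresnormalformsymbolic{\Gamma}{j}$. In addition, the $\oplus$ clause duplicates $U(\Gamma)$.

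A naive bound of $N^2$ children at each of $N$ levels gives $N^{2N}=2^{O(N\log N)}$. That is too weak, so we need a potential argument. The first thing to try is to measure each sequent by its size $s$ and show that the number of leaves $T(s)$ satisfies $T(s)\le 2^{s}$, or $c^{s}$ for a suitable constant $c$, by induction.
\begin{itemize}
\item \emph{Binary propositional clauses} ($\land$ and $\oplus$). Each child has size at most $s-1$, so $2\cdot 2^{s-1}=2^s$.
\item \emph{Modal clause.} The sequent has modal judgements $\Box\phi_i>b_i$ and $\Diamond\psi_j>d_j$ plus literals. The child for the pair $(i,j)$ has size $\lvert\phi_i\rvert+\sum_{k\le j}\lvert\psi_k\rvert+O(\cdot)$, which is at most $s-(\text{sizes of the other modal judgements})-1$. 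Summing $2^{\text{child size}}$ over $j$ for a fixed $i$ is a geometric-type sum: the prefixes in $j$ grow by at least $\lvert\psi_{j}\rvert+2\ge 3$ at each step, so the sum is dominated by its largest term up to a constant factor below $2$. Summing then over $i$ uses that the sets $\{\phi_i\}$ are disjoint parts of $\Gamma$, so $\sum_i 2^{\lvert\phi_i\rvert+R}\le 2^{\sum_i\lvert\phi_i\rvert+R}$, where $R$ denotes the total size of the diamond part (the $\psi$-judgements).
\end{itemize}
Choosing the base $c$ appropriately (for instance $c=2$ with the judgement size convention $+2$ providing the slack) closes the induction: $T(s)\le c^{s}$.

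\textbf{Conclusion.} The call tree has at most $2^{O(N)}$ leaves and depth at most $N$, hence at most $2^{O(N)}\cdot N$ nodes, each with overhead $O(N^2)$. The total size of $U(\{\theta>1\})$ is therefore $2^{O(N)}=2^{O(\lvert\phi\rvert)}$.

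For $\forall p.\,\phi=\lnot\exists p.\,(\lnot\phi)$, negation adds one symbol and $\lvert\lnot\phi\rvert=\lvert\phi\rvert+1$, so the same bound holds. If repeated subterms such as the duplicated $U(\Gamma)$ in the $\oplus$ clause obstruct the potential argument, the fallback is to note that the clause $U(\Gamma)\land\clip{b-1}$ can be handled by charging the copy to the sibling branch, which has strictly smaller sequent size.
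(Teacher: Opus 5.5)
Your proposal is correct and follows the paper's architecture: depth at most $N$, at most $2^N$ leaves by induction on size, $O(N^2)$ local overhead from $C_\Gamma$ and the $n(m+1)+m$ modal conjuncts, total size $O(LN^3)=2^{O(N)}$, and $\forall p.\,\phi$ by duality.

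\textbf{Where you differ.} The difference is how you close the modal case of the leaf count. The paper uses a uniform bound. Each of the $n(m+1)+m$ recursive calls is on a sequent at least $n+m$ smaller, because every modal judgement is either dropped or loses its outer modality. Since $n(m+1)+m<2^{n+m}$, base $2$ closes. You instead sum geometrically over the nested diamond prefixes and then use $\sum_i 2^{x_i}\le 2^{\sum_i x_i}$ across the boxes. This also works, but it needs more bookkeeping and gains nothing, since the crude count already gives base~$2$.

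\textbf{A measure to fix.} Do not run the induction on the judgement-size measure $\sum(\lvert\chi\rvert+2)$. The $\lor$ clause replaces $\chi\lor\eta>b$ by the two judgements $\chi>b$ and $\eta>b$, which raises that measure by one. So your implicit unary step fails there, and so does the claim that every sequent in the tree has size at most $N$. The latter only costs a constant factor in the overhead, but the former breaks the induction as stated. Use $\sum\lvert\chi\rvert$ instead; it is what you already use for the depth bound. With this measure the prefix increments are only $\lvert\psi_j\rvert\ge 1$, so the geometric factor is below $2$ rather than $8/7$. The $n+m$ removed modalities still give enough slack for base $2$; a short check of $n+m=1$ versus $n+m\ge 2$ suffices.

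\textbf{The fallback is unnecessary.} Your closing remark about the duplicated $U(\Gamma)$ in the $\oplus$ clause is not needed. That call is on a strictly smaller sequent and is already covered by your binary case.
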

	\begin{proof}
		The proof of the size bound uses the same idea as that for the generalized form of deductive interpolants. Let $N$ be the size of the initial symbolic sequent. Once again, every recursive call strictly decreases syntactic size, so every branch of the computation tree has length $O(N)$. Furthermore, the number of total nodes in the computation tree is again in $O(LN)$, where $L$ is the number of leaves. We obtain the bound $L\leq 2^N$ for the number of leaves in the following way:
		First, we note that the propositional clauses use at most two recursive calls, while the modal clause for $n$ boxes and $m$ diamonds uses $n(m+1)+m$ recursive calls.
		Additionally, for the modal clause, each recursive call decreases the syntactic size by at least $k=n+m$, as for each call every one of the $n+m$ modal formulae is either discarded completely or retained with its outer modality removed.
		Since $n(m+1)+m < 2^k$, induction on syntactic size yields the bound~$L\leq 2^N$.
		This time, the local overhead is $O(N^2)$, because of the opposing-literal conjuncts in $C_\Gamma$ and the $n(m+1)$ modal contributions. We then obtain $|U(\Gamma)|=O(LN^3)=O(N^3 2^N)=2^{O(N)}$.
		The bound for $\forall p.\,\phi$ follows by duality.
	\end{proof}
	
\noindent	Note that rational input parameters yield rational parameters throughout: The construction uses only addition, subtraction, clipping, and division by~$2$. Thus, when replacing syntactic size by encoding length, and restricting to rationals, we still obtain an exponential bound for the size of interpolants. 
It is also clear that both this construction and the \interpolationCalculus of \Cref{sec:uniform-interpolation} yield exponential-time algorithms when restricting numerical parameters to the rationals (encoded in binary); the proofs of their size bounds give an exponential bound on the number of nodes in their computation trees and rule selection and local numerical computations take polynomial time per node. Their binary encoding lengths also remain polynomially bounded along each branch. As such, computing an interpolant takes at most exponential time in the total binary encoding length of the input.

        \section{Conclusions}

        We have established that \emph{non-expansive fuzzy modal
          logic}, which occupies an intermediate position between
        Zadeh modal logic on the one hand and \L{}ukasiewicz or
        more precisely Pavelka modal logic on the other hand, has uniform
        interpolation. Indeed, uniform interpolants have (only)
        exponential size and can be computed in exponential time, the
        same as for the two-valued correspondent (i.e.~the modal
        logic~$\mathbf K$ or the description logic $\ALC$). The
        computation of uniform interpolants is performed along proofs
        in a sequent calculus obtained by dualization of a previous
        tableau calculus~\cite{GebhartEA25}.

        Future work concerns on the one hand the extension of the
        result to more expressive logics, such as fuzzy hybrid logic
        (in the two-valued setting, uniform interpolation does extend
        to hybrid logic~\cite{JungEA26b}) or fuzzy first-order
        logic~\cite{Hajek98}. On the other hand, there is interest in
        varying the underlying type of models, e.g.~to probabilistic
        models. One would thus aim to cover logics such as
        quantitative probabilistic modal logics, in which use of the
        non-expansive propositional base is
        well-established~\cite{BreugelWorrell05}. Ideally, such
        results would be obtained generically, using methods of
        (fuzzy) coalgebraic
        logic~\cite{SchroderPattinson11,WildSchroder21}.

\bibliography{interpolation}

\newpage
\appendix 
\section{Appendix: Additional Details and Omitted Proofs}\label{appendix}

\subsection*{Details for \cref{sec:prelim}}
\subsubsection*{Failure of Craig Interpolation for Propositional Pavelka Logic}	
	Fix a set~$\atoms$ of atoms. Then the formulae of \emph{propositional Pavelka logic} are defined by the grammar:
	\begin{equation*}
		\phi, \psi \Coloneqq c \mid p \mid \lnot \phi \mid \phi \oplus \psi
	\end{equation*}
	where $c \in \unitInterval$ are constants and $p \in \atoms$ atoms.
	Their semantics is defined over the same kind of fuzzy Kripke frames:
	\begin{gather*}
		V_{\mathcal{M}}(c,w) = c 
		\qquad
		V_{\mathcal{M}}(\lnot \phi,w) = 1-V_{\mathcal{M}}(\phi,w)\\
		V_{\mathcal{M}}(\phi \oplus \psi,w) = \min\{V_{\mathcal{M}}(\phi,w) + V_{\mathcal{M}}(\psi,w),1\}.
	\end{gather*}
	We define interpolants the same way as before, but obtain that even in the propositional case, this logic does not even admit Craig interpolants for the pointwise order: Write $\phi \ominus \psi \coloneq \lnot (\lnot \phi \oplus \psi) = \max\{0, \phi - \psi\}$, $\phi \lor \psi \coloneq (\psi \ominus \phi) \oplus \phi = \max\{\phi, \psi\}$, and $\phi \land \psi = \lnot (\lnot \phi \lor \lnot \psi) \coloneq \min\{\phi, \psi\}$. Then we obtain the following counterexample:
	\begin{align*}
		\phi &\coloneq p \land (q \ominus p)\\
		\psi &\coloneq r \lor (q \ominus r)
	\end{align*} 
	We then obviously have for all valuations of $p,q,r$ that $\phi(p,q) \leq q/2 \leq \psi(q,r)$ and as such $\models \phi \leq \psi$. An interpolant $\chi$ would need to be entirely dependent on $q$ and satisfy $\phi(p,q) \leq \chi(q) \leq \psi(q,r)$. However we have:
	\begin{equation*}
		\sup_p \phi(p,q) = q/2 = \inf_r \psi(q,r),
	\end{equation*}
	so $\chi(q) = q/2$. However, since every rational Pavelka logic formula also has a continuous piecewise affine function with integer slopes as evaluation function, no formula has $q/2$ as its evaluation. As such, no Craig-interpolant exists, and consequently uniform interpolants do not exist in general.

	\begin{lemma}\label{lemma:finite-realization}
	For every finite set $\Phi$ of formulae, the set of jointly
	attainable truth-value vectors for $\Phi$ is compact.
	Moreover, every such vector is realized at the root of
	a finite tree model.
	\end{lemma}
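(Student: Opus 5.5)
We define the attainable set $T_\Phi=\{(V_{\mathcal M}(\phi,w))_{\phi\in\Phi}\mid \mathcal M\text{ a model},\ w\in W_{\mathcal M}\}\subseteq[0,1]^\Phi$. We may close $\Phi$ under subformulae, since projecting a compact set gives a compact set and realizations transfer. We argue by induction on the maximal modal depth of $\Phi$, proving both claims simultaneously. In fact we prove a slightly stronger statement: every vector in the closure of $T_\Phi$ is realized at the root of a finite tree model. This gives closedness of $T_\Phi$, and boundedness is trivial, so compactness follows.

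\emph{Depth $0$.} The truth values are given by a continuous map $[0,1]^{\voc(\Phi)}\to[0,1]^\Phi$ sending atom values to formula values. Its components are built from min, max, $1-x$ and truncated shifts. Its image is therefore compact. Every point of the image is realized by a one-world model with the corresponding valuation.

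\emph{Inductive step.} Write the modalized members of $\Phi$ as $\Diamond\psi_1,\dots,\Diamond\psi_k$; boxes become diamonds after negation. Let $\Psi=\{\psi_1,\dots,\psi_k\}$, which has smaller depth. By the induction hypothesis, $T_\Psi$ is compact and each of its points is realized by a finite tree. A vector in $T_\Phi$ is determined continuously by two pieces of data: the atom values at the root, and the values $x_i=V(\Diamond\psi_i,w)=\sup_v \min(R(w,v),V(\psi_i,v))$. The key step is to characterize the possible tuples $(x_1,\dots,x_k)$ as exactly those obtained from \emph{finitely many} (at most $k$) successors, with edge weights $r_l$ and value vectors $y^{(l)}\in T_\Psi$, via $x_i=\max_l\min(r_l,y^{(l)}_i)$.

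Inclusion of finite realizations is immediate. For the converse, fix $i$. For each $n$ choose a successor $v_{i,n}$ with $\min(R(w,v_{i,n}),V(\psi_i,v_{i,n}))>x_i-1/n$. Pass to a convergent subsequence of $(R(w,v_{i,n}),\text{vector at }v_{i,n})\in[0,1]\times T_\Psi$; this uses compactness of $T_\Psi$. The limit $(r_i,y^{(i)})$ lies in $[0,1]\times T_\Psi$ and satisfies $\min(r_i,y^{(i)}_i)\ge x_i$. It also satisfies $\min(r_i,y^{(i)}_j)\le x_j$ for all $j$, because every actual successor obeys these inequalities and they are closed conditions. Using these $k$ successors therefore yields exactly $x$.

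Hence $T_\Phi$ is the image of the compact set $[0,1]^{\voc}\times([0,1]\times T_\Psi)^k$ under a continuous map, so it is compact. Every point is realized by attaching the finite tree realizations of the $y^{(i)}$, given by the induction hypothesis, as children of a fresh root with edge weights $r_i$.

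The main obstacle is the sup-attainment step: showing that a supremum over possibly infinitely many successors is already achieved by limit successors, and that adding those successors does not overshoot any other diamond value. The closed-inequality argument above handles the overshooting. It does, however, crucially need compactness of $T_\Psi$, which is why compactness and finite realization must be proved together by induction.
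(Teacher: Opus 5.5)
Your proof is correct and follows essentially the same route as the paper: induction on modal depth, where compactness of the attainable set for the modal arguments lets you pick, for each diamond, a limit point in the closure of the actual successor vectors that attains the supremum, realize it by a finite tree via the induction hypothesis, and attach these trees to a fresh root. The differences are minor. You make the no-overshoot check explicit via the closed inequalities $\min(r_i,y^{(i)}_j)\le x_j$, which the paper leaves implicit in choosing points from that closure. You also obtain compactness level by level as the continuous image of $[0,1]^{\voc(\Phi)}\times([0,1]\times T_\Psi)^k$, whereas the paper pads all realizing trees to one fixed finite shape and takes the image of a unit cube.
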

	\begin{proof}
		We prove both claims by induction on modal
		depth. At depth zero, truth values depend continuously on
		finitely many atomic valuations.
		For the induction step, we consider the set of vectors consisting
		of a successor's accessibility degree and the truth values
		of all modal arguments occurring in $\Phi$. Its closure is
		compact by the induction hypothesis, and as such, the continuous functions
		defining the relevant modal truth values therefore attain
		their extrema on this closure. We choose one point attaining
		the required extremum for each modal subformula.
		Then by induction, the formula-value coordinates of each chosen
		point are realized in a finite tree model. We attach these
		models to a new root with the chosen accessibility degrees,
		preserving all truth values in $\Phi$.
		This construction directly bounds tree depth and branching solely
		in terms of $\Phi$. We can, of course, add edges of degree zero to transform these trees to a fixed finite tree. Then their truth-value vectors
		are the continuous image of a finite-dimensional
		unit cube, proving compactness.
	\end{proof}

\subsection*{Details for \cref{normalforms}}
        	\begin{proof}[Proof of \cref{lemma:nnf}]
		The rules trivially preserve atoms and modal depth.
		The equivalences themselves immediately follow from the semantics. Applying them from left to right eliminates double negations and pushes all remaining
		negations inward. If no rule can be applied anymore in this way, the formula has reached negation normal form. No rule duplicates a subformula, and the resulting formula has only linear syntactic size.
	\end{proof}

\subsection*{Details for \cref{sec:sequent}}
	\begin{proof}[Proof of \Cref{theorem:sequentcalccorrect}]
		The axioms~$(\text{Ax }c), (\text{Ax }p)$, and~$(\text{Range}_\top)$ each have judgements that are satisfied in any model and any world in their conclusion. 
		In the first case, it is a valid inequality involving only constants; in the second case, we have judgements which are equivalent to the statement that the atom's value is in the unit interval; in the third case, we have a statement which tells us that the value of some formula is greater than or equal to~$0$.
		As such, their conclusions are valid, and by definition, application of that rule provides us with a derivation of their conclusion.
		We now show for the other propositional rules that they are both sound and invertible, i.e. that their conclusion is valid if and only if their premise(s) are valid.
		For the rule~$(\text{Range}_\bot)$ note that $\phi \gineqOp c \strictnessOp{\gineqOp} 1$ would imply $\phi > 1$, which is never satisfied. As such validity of the conclusion is equivalent to validity of the premise.
		For the rule~$(\text{Const}_\bot)$ note that $c \gineqOp d$ is indeed false according to the side condition $c \complementOp{\gineqOp} d$ and as such, validity of the conclusion is equivalent to validity of the premise.
		For the rules~$(\ominus)$ and~$(\oplus)$, note that their conditions exactly ensure that their truncated addition or subtraction judgements in the conclusion are exactly equivalent to their respective judgement in the premise. If these conditions are not satisfied, we may of course use the rules~$(\text{Range}_\top)$ or~$(\text{Range}_\bot)$ to simplify the sequent.
		For the rule~$(\lor)$ it is immediately obvious that the premise and conclusion are logically equivalent.
		Note that for the rule~$(\land)$, validity of the conclusion is equivalent to the statement that the sequent~$\Gamma$ itself is satisfied or both~$\phi \gineqOp c$ and~$\psi \gineqOp c$ are satisfied in each world of each model. For a world in a model that satisfies~$\Gamma$ we also have that both premises are satisfied, since~$\Gamma$ is satisfied, and for a world in a model where both~$\phi \gineqOp c$ and~$\psi \gineqOp c$ are satisfied we also have that both premises are satisfied, as their respective judgements are satisfied. Conversely, if both premises are satisfied in a world in a model, then the conclusion is also trivially satisfied; either~$\Gamma$ has to be satisfied or both~$\phi \gineqOp c$ and~$\psi \gineqOp c$ have to be satisfied.
		As such, both premises are valid if and only if the conclusion is valid.
		
		For soundness of rule~$(\Box)$, suppose its conclusion is
		invalid. By \Cref{lemma:finite-realization}, it has a finite
		countermodel. Failure of the principal judgement
		$\Box\phi\gineqOp c$ is therefore witnessed by some world~$v$.
		Writing $r$ for its accessibility degree, we have
		\begin{equation*}
			V(\phi,v)\complementOp{\gineqOp}c,
			\qquad
			\begin{cases}
				r>1-c,&\text{if }\gineqOpSymbol={\geq},\\
				r\geq1-c,&\text{if }\gineqOpSymbol={>}.
			\end{cases}
		\end{equation*}
		For every diamond judgement selected for the premise, the
		coverage condition gives $r\geq c_i$, with strict inequality
		whenever $\gineqOpSymbol_i={>}$.
		Since $\Diamond\psi_i\gineqOp_i c_i$ is false at the
		distinguished world, we obtain
		$V(\psi_i,v)\complementOp{\gineqOp_i}c_i$.
		Thus $v$ falsifies the premise, proving soundness.
		
		For completeness, it remains to consider propositionally
		saturated sequents
		\begin{equation*}
			\Gamma =
			\Lambda
			\cup\{\Box\phi_i\gineqOp_i c_i\mid 1\leq i\leq n\}
			\cup\{\Diamond\psi_j\gineqOp'_j d_j\mid 1\leq j\leq m\},
		\end{equation*}
		where $\Lambda$ contains only literal judgements.
		For each boxed judgement, let
		\begin{equation*}
			J_i =
			\{j\mid 1\leq j\leq m,\;
			1\covcomp{\gineqOp_i}{\gineqOp'_j}c_i+d_j\},
			\qquad
			\Delta_i =
			\{\phi_i\gineqOp_i c_i\}
			\cup\{\psi_j\gineqOp'_j d_j\mid j\in J_i\}
		\end{equation*}
		be the corresponding modal premise.
		Suppose every $\Delta_i$ is invalid. We combine countermodels
		of these premises into a countermodel of $\Gamma$, using a
		separate successor to falsify each boxed judgement.
		Choose finite pointed countermodels $(\mathcal M_i,w_i)$
		of the $\Delta_i$, take disjoint copies, and add a new root~$w$.
		Give its edge to $w_i$ degree
		\begin{equation*}
			r_i =
			\begin{cases}
				1-c_i,
				&\text{if }\gineqOpSymbol_i={>},\\
				1-c_i+\varepsilon_i,
				&\text{if }\gineqOpSymbol_i={\geq}.
			\end{cases}
		\end{equation*}
		In the second case, choose $\varepsilon_i>0$ sufficiently
		small that $r_i\leq1$ and $r_i<d_j$ for every $j\notin J_i$.
		This is possible because saturation ensures $c_i>0$, while
		the coverage condition gives $d_j>1-c_i$ for every such~$j$.
		Retain all accessibility degrees within the components and
		set all other new accessibility degrees to zero.
		Each $w_i$ witnesses failure of its boxed judgement at~$w$.
		Every diamond judgement remains false as well: at $w_i$,
		either its argument judgement belongs to $\Delta_i$ and
		is false, or the edge degree satisfies
		\begin{equation*}
			r_i<d_j \quad\text{if }\gineqOpSymbol'_j={\geq},
			\qquad
			r_i\leq d_j \quad\text{if }\gineqOpSymbol'_j={>}.
		\end{equation*}
		Since there are finitely many chosen successors, taking
		their maximum preserves these bounds. The remaining edges,
		of degree zero, cannot satisfy a diamond judgement either,
		since no range axiom applies.
		Finally, the absence of applicable range and atomic axioms
		allows us to choose the atomic valuation at~$w$ so that all
		judgements in $\Lambda$ are false. Thus $\Gamma$ is invalid.
		For $n=0$, the same argument uses a root with all outgoing
		accessibility degrees zero.
		Consequently, every valid saturated sequent has a valid
		modal premise. Together with the invertibility of the
		non-axiomatic propositional rules and the strict decrease
		of total formula size along each branch, this proves
		completeness by induction.
	\end{proof}
	
\subsection*{Details for \cref{sec:uniform-interpolation}}

\begin{proof}[Proof of \Cref{theorem:interpolationcalccorrect}]
		We prove this by investigating each of the rules in an induction argument, showing that each rule preserves this property.
		For the axioms~$(\text{Ax }c), (\text{Ax }q)$, and~$(\text{Range}_\top)$ we can see that the left hand sequent~$\Gamma'$ itself was already valid. Thus, a $p$-free complement of~$\Gamma'$ is logically equivalent to any false statement, in our case $0 \strictnessOp{\gineqOp} 0.5$.
		For the rules~$(\text{Range}_\bot), (\text{Const}_\bot), (\ominus), (\oplus)$, and~$(\lor)$ we can also immediately see that the premise of the rule is logically equivalent to the conclusion of the rule; thus by induction we can dispatch these cases.
		For the rule~$(\land)$, notice that the right hand side
		\begin{equation*}
			\beta \coloneq \normassertiondefault(\beta_1) \lor \normassertiondefault(\beta_2) \strictnessOp{\gineqOp} 0.5
		\end{equation*}
		of the conclusion of the rule is logically equivalent to the sequent~$\beta_1, \beta_2$. By induction, we thus immediately obtain that the left hand side~$\Gamma'$ of the conclusion becomes provable by adding~$\beta_1, \beta_2$. The fact that the right hand side~$\beta$ of the conclusion is indeed a $p$-free complement of the left hand side~$\Gamma'$ is more subtle: Let~$\gamma$ be another~$p$-free judgement, which makes~$\Gamma', \gamma$ provable. Then there exists a proof of~$\Gamma', \gamma$. Without loss of generality, we can assume the first rule to be applied to~$\Gamma', \gamma$ along this proof is the rule~$(\land)$, simplifying the same term as the rule~$(\land)$ in our \sequenttreename. We can do so as the propositional rules of \Cref{tab:seqcalc} are invertible, allowing us to reorder their applications.
		Then we have:
		\begin{align*}
			\lnot \beta &\equiv \normassertiondefault(\beta_1) \lor \normassertiondefault(\beta_2) \directionOp{\gineqOp} 0.5\\
			&\equiv \lnot (\normassertiondefault(\beta_1) \lor \normassertiondefault(\beta_2)) \gineqOp 0.5\\
			&\equiv (\lnot \normassertiondefault(\beta_1)) \land (\lnot \normassertiondefault(\beta_2)) \gineqOp 0.5
		\end{align*}
		Thus, we obtain a proof for~$\lnot \beta, \gamma$ in the following way: We first apply the rule~$(\land)$ to~$\lnot \beta$, obtaining the premises
		\begin{equation*}
			\lnot \normassertiondefault(\beta_1) \gineqOp 0.5, \gamma \quad \text{and} \quad \lnot \normassertiondefault(\beta_2) \gineqOp 0.5, \gamma.
		\end{equation*}
		However, the original proof for~$\Gamma', \gamma$ contains sub-proofs for both~$\Gamma'_1, \gamma$ and~$\Gamma'_2, \gamma$, which are both branches of the first applied rule, which is the rule~$(\land)$. Note, however, that by the induction hypothesis~$\beta_1$ and~$\beta_2$ are $p$-free complements of~$\Gamma'_1$ and~$\Gamma'_2$ respectively. Their negations are exactly the judgements
		\begin{equation*}
			\lnot \normassertiondefault(\beta_1) \gineqOp 0.5 \quad \text{and} \quad \lnot \normassertiondefault(\beta_2) \gineqOp 0.5.
		\end{equation*}
		It then immediately follows that both premises are provable and thus~$\lnot \beta, \gamma$ is provable.
		The rule~$(\Box)$ requires more work: First note that after repeated application of the rule~$(\land)$ in \Cref{tab:seqcalc}, we obtain for each branch a sequent of the form~$\Gamma, \alpha$, where~$\alpha$ is either:
		\begin{align*}
			\alpha &= \normassertiondefault(\psi \complementOp{\gineqOp} c) \strictnessOp{\gineqOp} 0.5 &&\equiv \psi \complementOp{\gineqOp} c \quad &&\text{for }\psi \gineqOp c \in \Gamma, \psi \in \lit{p}\\
			\alpha &= \boxsequentsresnormalform{\Gamma}{i}(\boxsequents{\beta}_i) \strictnessOp{\gineqOp} 0.5 &&\equiv \Diamond  \normassertion{\strictnessOp{\gineqOp}}{(1-c_i)} (\boxsequents{\beta}_i) \strictnessOp{\gineqOp} 1-c_i \quad &&\text{for }1 \leq i \leq n\\
			\alpha &= \diamondsequentsresnormalform{\Gamma}{j} (\diamondsequents{\beta}_j) \strictnessOp{\gineqOp} 0.5 &&\equiv \Box  \normassertion{\strictnessOp{\gineqOp}}{(1-d_j)} (\diamondsequents{\beta}_j) \strictnessOp{\gineqOp} 1-d_j \quad &&\text{for }1 \leq j \leq m
		\end{align*}
		Applying the rules~$(\ominus)$ and $(\oplus)$ appropriately, we obtain the above equivalent judgements for $\alpha$. We will thus treat $\alpha$ as being equal to one of these forms.
		For the case~$\alpha = \psi \complementOp{\gineqOp} c$, we have, by construction, $\psi \gineqOp c \in \Gamma$, so we can apply the atomic axiom to conclude the proof for~$\Gamma, \alpha$.
		For the case~$\alpha = \Diamond \normassertion{\strictnessOp{\gineqOp}}{(1-c_i)} (\boxsequents{\beta}_i) \strictnessOp{\gineqOp} 1-c_i$, note that we may apply the rule~$(\Box)$ in \Cref{tab:seqcalc} to the judgement~$\Box \phi_i \gineqOp c_i \in \Gamma$. This then gives us the sequent~$\boxsequents{\Gamma}_i,
		\normassertion{\strictnessOp{\gineqOp}}{(1-c_i)}
		(\boxsequents{\beta}_i)
		\strictnessOp{\gineqOp} 1-c_i$, which, after simplification, is equivalent to~$\boxsequents{\Gamma}_i, \boxsequents{\beta}_i$. We conclude this case by using the induction hypothesis.
		The case~$\alpha = \Box \normassertion{\strictnessOp{\gineqOp}}{(1-d_j)} (\diamondsequents{\beta}_j) \strictnessOp{\gineqOp} 1-d_j$ is analogous to the case~$\alpha = \Diamond \normassertion{\strictnessOp{\gineqOp}}{(1-c_i)} (\boxsequents{\beta}_i) \strictnessOp{\gineqOp} 1-c_i$, except we now apply the rule~$(\Box)$ in \Cref{tab:seqcalc} to~$\alpha$ itself.
		The fact that 
		\begin{equation*}
			\beta \coloneq \bigwedge_{\psi \gineqOp c \in \Gamma, \psi \in \lit{p}} \normassertiondefault(\psi \complementOp{\gineqOp} c) 
			\land 
			\bigwedge_{1 \leq i \leq n} \boxsequentsresnormalform{\Gamma}{i}(\boxsequents{\beta}_i)
			\land
			\bigwedge_{1 \leq j \leq m} \diamondsequentsresnormalform{\Gamma}{j} (\diamondsequents{\beta}_j)
			\strictnessOp{\gineqOp} 0.5
		\end{equation*}
		is a $p$-free complement of $\Gamma$ follows by a similar argument as in the case for~$(\land)$. If $\gamma$ is another $p$-free judgement which makes $\Gamma$ provable, then we may obtain a proof for~$\lnot \beta, \gamma$ in the following way: We start by simplifying~$\lnot \beta$, until we obtain the following sequent:
		\begin{gather*}
			\{\psi \gineqOp c \mid \psi \gineqOp c \in \Gamma, \psi \in \lit{p}\},
			\{\Box \normassertion{\complementOp{\gineqOp}}{c_i} (\boxsequents{\beta}_i) \gineqOp c_i \mid 1 \leq i \leq n\},\\
			\{\Diamond \normassertion{\complementOp{\gineqOp}}{d_j} (\diamondsequents{\beta}_j) \gineqOp d_j \mid 1 \leq j \leq m\},
			\gamma
		\end{gather*}
		We may assume that propositional rules are applied before modal
		rules and that each modal application selects all eligible
		diamond judgements. 
		We then follow the proof for~$\Gamma, \gamma$ in simplifying~$\gamma$, until we either apply an axiom or the rule~$(\Box)$ in \Cref{tab:seqcalc} on each branch.
		We consider the last sequent~$\Gamma'$ of each branch before an axiom or the rule~$(\Box)$ would be applied in the original proof in the respective branch and prove that we may do the same now.
		If we originally would apply an axiom, by construction, it must use at least one judgement which emerges from simplifying~$\gamma$, and at most one judgement of~$\Gamma$. Since~$\gamma$ is $p$-free, whenever a judgement of~$\Gamma$ is used to apply an axiom, it must be a judgement with a formula in~$\lit{p}$. However, we then have the same judgement in the sequent $\Gamma'$, so we may apply the same axiom. If the axiom only uses judgements from the simplification of~$\gamma$,  we may likewise apply it. Thus we can conclude a branch by applying the same axiom as in the proof for~$\Gamma, \gamma$.
		If the rule~$(\Box)$ is applied originally, we apply it to the corresponding boxed judgement in~$\Gamma'$: if the boxed judgement emerged from simplifying~$\gamma$, we use that same judgement, while if it was the $i$-th boxed judgement in~$\Gamma$, we use $\Box\normassertion{\complementOp{\gineqOp}}{c_i} (\boxsequents{\beta}_i)\gineqOp c_i$.
		Write~$\Delta$ for the original premise and~$\Delta'$ for the new premise. Let~$\Pi$ consist of the judgements in~$\Delta$ coming from~$\gamma$. These are $p$-free and have the same comparison operator, so adjustments and disjunction yield an equivalent $p$-free judgement~$\rho$. If~$\Pi$ is empty, take $\rho$ to be a false judgement.
		
		In the first case, suppose the original application selected at least one diamond judgement from~$\Gamma$. Since all eligible diamond judgements are selected and their bounds are ordered, those coming from~$\Gamma$ form a prefix $\diamondsequents{\Gamma}_j$. Thus $\Delta=\diamondsequents{\Gamma}_j\cup\Pi$. The new premise~$\Delta'$ contains~$\Pi$ and
		\begin{equation*}
			\normassertion{\complementOp{\gineqOp}}{d_j} (\diamondsequents{\beta}_j)\gineqOp d_j,
		\end{equation*}
		which is the logical complement of~$\diamondsequents{\beta}_j$. By the induction hypothesis, $\diamondsequents{\beta}_j$ is a $p$-free complement of~$\diamondsequents{\Gamma}_j$. Since~$\Delta$ is provable and~$\Pi\equiv\rho$, we therefore obtain $\diamondsequents{\beta}_j\models\rho$. Consequently, $\lnot\diamondsequents{\beta}_j,\Pi$ is valid, and hence so is~$\Delta'$. By \Cref{theorem:sequentcalccorrect}, $\Delta'$ is provable. If no diamond judgement from~$\Gamma$ was selected, the original premise consists only of~$\Pi$ and is contained in~$\Delta'$, so provability follows directly.
		
		The second case is analogous: now $\Delta=\boxsequents{\Gamma}_i\cup\Pi$, while~$\Delta'$ contains~$\Pi$ and the logical complement of $\boxsequents{\beta}_i$. The induction hypothesis gives $\boxsequents{\beta}_i\models\rho$, so again~$\Delta'$ is valid and therefore provable.
	\end{proof}

        \subsection*{Details for \cref{sec:pointwise-interpolation}}

We prove the invariant~\eqref{eq:invariantproperty} for the construction
in \Cref{sec:pointwise-interpolation}, with the atom~$p$ fixed throughout.
We first justify changing the normalization threshold in modal steps.

\begin{lemma}\label{lemma:pointwise-modal-adjustment}
    Suppose $F\geq a\equiv H\geq0.5$, where $a\in(0,1]$.
    For $s\in(0,1]$ and $\mathsf M\in\{\Diamond,\Box\}$,
    \begin{equation*}
        \shiftformula{a-s}{\mathsf M\shiftformula{s-a}{F}}\geq a
        \quad\equiv\quad
        \shiftformula{0.5-s}{\mathsf M\shiftformula{s-0.5}{H}}\geq0.5.
    \end{equation*}
\end{lemma}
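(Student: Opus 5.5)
The plan is to strip the outer truncated shifts, show that the inner threshold judgements coincide world by world, and then push this agreement through the modality~$\mathsf M$. The only delicate point is the diamond, where the supremum need not be attained.

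\emph{Step 1: remove the outer shifts.} For $a\in(0,1]$, $s\in(0,1]$ and any value $x\in[0,1]$ we have
\[
\clip{x+a-s}\geq a \iff x\geq s .
\]
If $x+a-s\geq a$ then $\clip{x+a-s}\geq\min(1,a)=a$. Conversely, if $x+a-s<a$, then $\clip{x+a-s}<a$: either the clip returns $\max(0,x+a-s)<a$ (using $a>0$), or it returns $1$, which is impossible since $x+a-s<a\le 1$. Hence
\[
\shiftformula{a-s}{\mathsf M\shiftformula{s-a}{F}}\geq a \equiv \mathsf M\shiftformula{s-a}{F}\geq s .
\]
Taking $a=0.5$, the right-hand side of the claim is likewise equivalent to $\mathsf M\shiftformula{s-0.5}{H}\geq s$.

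\emph{Step 2: the inner judgements agree.} Put $G_1=\shiftformula{s-a}{F}$ and $G_2=\shiftformula{s-0.5}{H}$. The same computation, now using $s\in(0,1]$, gives $G_1\geq s\equiv F\geq a$ and $G_2\geq s\equiv H\geq 0.5$. By the hypothesis $F\geq a\equiv H\geq 0.5$, the sets of worlds where $G_1\geq s$ and where $G_2\geq s$ coincide in every model.

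\emph{Step 3: push through the box.} By \eqref{eq:box-sem}, $\Box G\geq s$ holds at $w$ iff for every $v$ we have $\max(1-R(w,v),G(v))\geq s$. That is, for every $v$, either $R(w,v)\leq 1-s$ or $G(v)\geq s$. This depends only on the set $\{v\mid G(v)\geq s\}$, so Step~2 settles the case $\mathsf M=\Box$.

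\emph{Step 4: the diamond.} This is the main obstacle. The judgement $\Diamond G\geq s$ asks that $\sup_v\min(R(w,v),G(v))\geq s$, and when the supremum is not attained this depends on the values of $G$ slightly below $s$. Those values are not controlled by the hypothesis. The plan is to reduce to finite models, where the supremum is a maximum. In a finite model, $\Diamond G\geq s$ holds at $w$ iff some $v$ has $R(w,v)\geq s$ and $G(v)\geq s$, which again depends only on the threshold set, so Step~2 finishes the argument there.

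For the reduction, suppose some pointed model $(\mathcal M,w)$ satisfies exactly one of $\Diamond G_1\geq s$ and $\Diamond G_2\geq s$. By \cref{lemma:finite-realization}, applied to the finite set $\{\Diamond G_1,\Diamond G_2\}$, the same truth-value vector is realized at the root of a finite tree model. There the same discrepancy occurs. The hypothesis $F\geq a\equiv H\geq0.5$ is a statement about all models, so it still holds in that finite model, contradicting the finite case. Combining Steps~1 to~4 yields the claimed equivalence.
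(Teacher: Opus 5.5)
Your proposal is correct and follows the paper's proof essentially step for step. You strip the inner and outer truncated shifts via the threshold equivalence, checking it directly by the clip computation where the paper cites \cref{lemma:assertionnormalization}. You then dispatch $\Box$ by noting that $\Box G\geq s$ depends only on the set of successors with $G\geq s$, and handle $\Diamond$ by transferring a putative discrepancy to a finite tree model via \cref{lemma:finite-realization}, where the supremum is attained.
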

\begin{proof}
    By \Cref{lemma:assertionnormalization}, the formulae
    $F_s=\shiftformula{s-a}{F}$ and
    $H_s=\shiftformula{s-0.5}{H}$ satisfy
    $F_s\geq s\equiv H_s\geq s$.
    This equivalence is preserved by both modalities.
    For $\Box$, this follows directly from its semantics: at each
    successor, either the accessibility degree is at most $1-s$ or
    the argument must have value at least~$s$.
    For $\Diamond$, a counterexample to either direction could, by
    \Cref{lemma:finite-realization}, be realized in a finite model.
    A successor witnessing one diamond judgement would then witness
    the other, a contradiction. Applying
    \Cref{lemma:assertionnormalization} to the outer shifts proves
    the claim.
\end{proof}

\begin{lemma}\label{lemma:pointwise-invariant}
    For every \symbolicSequent~$\Gamma$, \eqref{eq:invariantproperty}
    holds simultaneously for every $A\in(0,1]$.
\end{lemma}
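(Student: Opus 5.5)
We argue by induction on the total syntactic size of the formulae in~$\Gamma$, following the recursive definition of~$U$. For each clause we fix an arbitrary $A\in(0,1]$ and compare $U(\Gamma)\geq A$ with the right-hand side produced by the \interpolationCalculus{} of \Cref{tab:seqinterpolant} on $\Gamma - A$. By \Cref{theorem:interpolationcalccorrect} that right-hand side is equivalent to $\elim{p}(\Gamma-A)$. We may pick the rule applied to $\Gamma-A$ freely, since the result is independent of the computation tree up to equivalence. We therefore choose it to mirror the clause defining $U(\Gamma)$, and only need to show that $U(\Gamma)\geq A$ is equivalent to the judgement that rule produces from the premises. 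Here the induction hypothesis lets us replace each premise's interpolant by $U(\Gamma_k)\geq A'$ for the appropriate subsequent; $A'$ is $A$ except in modal steps.

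\textbf{Propositional clauses.} These are short case analyses on the side conditions, which in $U$ are encoded by constants $\clip{\cdot}$.
\begin{itemize}
\item $(c>b)$: the judgement $c>b-A$ is an axiom iff $A>b-c$, i.e.\ iff $\clip{b-c}<A$ fails. Otherwise it is dropped by $(\text{Const}_\bot)$. Hence $U(\Gamma)\land\clip{b-c}\geq A$ is equivalent to the interpolant: it is false exactly when the axiom fires, and otherwise it is $U(\Gamma)\geq A$.
\item $(\ominus)$ and $(\oplus)$: same pattern, using $(\text{Range}_\top)$ and $(\text{Range}_\bot)$ in the degenerate cases. The disjunct $U(\Gamma)\land\clip{b-1}$ covers the case $b-A\geq 1$, where the $\oplus$-judgement is simply deleted.
\item $(\land)$: the calculus returns $\normassertiondefault(\beta_1)\lor\normassertiondefault(\beta_2)\strictnessOp{\gineqOp}0.5$, equivalent to the sequent $\beta_1,\beta_2$. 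By induction this is $U(\Gamma_1)\geq A$ or $U(\Gamma_2)\geq A$, which is $U(\Gamma_1)\lor U(\Gamma_2)\geq A$.
\end{itemize}

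\textbf{Saturated case.} The interpolant is a conjunction of three kinds of conjuncts, which we match against $U(\Gamma)$ separately. This works because a conjunction of judgements with common threshold $A$ is $\bigwedge\geq A$.
\begin{itemize}
\item Constant part: $C_\Gamma\geq A$ is false precisely when $(\text{Range}_\top)$ or $(\text{Ax }p)$ fires at threshold $A$. The latter happens iff $1>(b-A)+(d-A)$, i.e.\ iff $A>\tfrac{b+d-1}{2}$.
\item Literal part: $\shiftformula{b-1}{\lnot\ell}\geq A$ is equivalent to $\ell\leq b-A$, the complement of $\ell>b-A$.
\item Modal parts: the calculus uses $\diamondsequentsresnormalform{\Gamma}{j}$ and $\boxsequentsresnormalform{\Gamma}{i}$ with bounds $1-(d_j-A)$ and $1-(b_i-A)$.
\end{itemize}
For the modal parts we apply \Cref{lemma:pointwise-modal-adjustment}. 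For diamonds we take $F=U(\{\psi_k>d_k\}_{k\leq j})$, $H$ its normalized interpolant, $a=A$ and $s=1-d_j+A$. The lemma turns the normalized form into $\shiftformula{A-s}{\Box\shiftformula{s-A}{F}}\geq A$, whose shifts are $d_j-1$ and $1-d_j$, independent of~$A$. The edge cases $s\notin(0,1]$ are handled via \Cref{remark:adjustmentConstructionUndefined} and the truncation in~$\shiftformula{\cdot}{\cdot}$.

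\textbf{Main obstacle: the box conjuncts.} The premise $\boxsequents{(\Gamma-A)}_i$ contains $\psi_k>d_k-A$ exactly when $1-(b_i-A)>d_k-A$, i.e.\ when $d_k<1-b_i+2A$. This set depends on~$A$. Since the $d_k$ are sorted, it is a prefix $\{1,\dots,j\}$ with $j$ the largest index satisfying $d_j<1-b_i+2A$. Equivalently, $j$ is determined by $\tfrac{b_i+d_j-1}{2}<A\leq\tfrac{b_i+d_{j+1}-1}{2}$.

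We will show that $\bigwedge_{j}(\boxsequentsresnormalformsymbolic{\Gamma}{i,j}\lor\clip{\tfrac{b_i+d_j-1}{2}})\geq A$ is equivalent to $\boxsequentsresnormalformsymbolic{\Gamma}{i,j^*}\geq A$ for the correct prefix $j^*$. Take the conjuncts in turn:
\begin{itemize}
\item For $j$ with $\tfrac{b_i+d_j-1}{2}\geq A$, the constant disjunct makes the conjunct trivially $\geq A$.
\item For $j<j^*$, the conjunct is implied by the $j^*$ conjunct. Adding more judgements to the premise sequent makes $\elim{p}$ weaker, so $U$ of a larger prefix is pointwise smaller by induction and monotonicity of $\Diamond$ and the shifts.
\end{itemize}
This monotonicity, together with carefully checking strict versus non-strict boundaries and clipping at $0$ and $1$, is where I expect most of the effort.
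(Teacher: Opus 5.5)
Your proposal is correct and follows essentially the same route as the paper's proof. Both argue by induction on size, compare $U(\Gamma)\geq A$ with the \interpolationCalculus{} output on $\Gamma-A$, match the constant, literal and modal conjuncts, and select the eligible box prefix via the floor $\clip{\frac{b_i+d_j-1}{2}}$. Both then use monotonicity $\models U(\Sigma)\leq U(\Delta)$ for $\Delta\subseteq\Sigma$ and apply \Cref{lemma:pointwise-modal-adjustment} with $a=A$ and $s=1-b+A$.

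A few small repairs are needed. Monotonicity is also what discards the disjunct $U(\Gamma,\chi>b-c)$ in the $\oplus$ clause when $b-1\geq A$, so prove it up front as the paper does. Note that adding judgements makes $\elim{p}$ stronger, not weaker, which is exactly why $U$ becomes pointwise smaller. In the constant clause, the axiom fires iff $\clip{b-c}\geq A$ fails, not iff $\clip{b-c}<A$ fails.
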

\begin{proof}
    We do induction over $\sum_{\chi>b\in\Gamma}|\chi|$, comparing
    $U(\Gamma)\geq A$ with the output of the \interpolationCalculus
    on $\Gamma-A$. By \Cref{theorem:interpolationcalccorrect}, that
    output is equivalent to $\elim{p}(\Gamma-A)$.
    The case $\Gamma=\emptyset$ is immediate from $U(\emptyset)=1$.

    We use two observations. First, for $0<A\leq1$ and $t\in\mathbb R$,
    \begin{align*}
        (F\lor\clip{t})\geq A
        &\quad\Longleftrightarrow\quad t\geq A\ \text{or}\ F\geq A,\\
        (F\land\clip{t})\geq A
        &\quad\Longleftrightarrow\quad t\geq A\ \text{and}\ F\geq A.
    \end{align*}
    Second, the induction hypothesis gives, for smaller sequents
    $\Delta\subseteq\Sigma$,
    \begin{equation}\label{eq:pointwise-monotonicity}
        \models U(\Sigma)\leq U(\Delta).
    \end{equation}
    Indeed, $(\Delta-A),U(\Delta)\geq A$ is valid, hence so is
    $(\Sigma-A),U(\Delta)\geq A$. The strongest-complement property
    for $\Sigma-A$ gives $U(\Sigma)\geq A\models U(\Delta)\geq A$
    for every positive~$A$, proving the inequality.

    \emph{Propositional clauses.}
    Write $\Gamma=\Delta\cup\{\chi>b\}$ for the selected judgement.
    If $\chi=c$ is constant, its shifted judgement is true exactly
    when $A>b-c$; otherwise it can be deleted. The factor
    $\clip{b-c}$ gives precisely these cases.
    The clauses for $\lor$ and $\land$ follow directly from the
    corresponding rules of the \interpolationCalculus and the
    induction hypothesis.
    For subtraction, $\eta\ominus c>b-A$ is automatically true
    when $A>b$, and otherwise equivalent to $\eta>b+c-A$.
    This gives the factor~$\clip{b}$ and the recursive call:
    whenever $A>b$, the factor~$\clip{b}$ makes
    $U(\Gamma)\geq A$ false. Otherwise, $\clip{b}\geq A$
    is trivially true, so the output inequality reduces to
    $U(\Delta,\eta>b+c)\geq A$, whose correctness follows
    from the induction hypothesis.
    For addition, if $A>b-1$, the judgement $\eta\oplus c>b-A$ is equivalent to $\eta>b-c-A$.
    Since $A>\clip{b-1}$, the second disjunct cannot reach~$A$,
    so the output inequality reduces to $U(\Delta,\eta>b-c)\geq A$, whose correctness follows from the induction hypothesis.
    If $b-1 \geq A$, the principal judgement is impossible and can be discarded. The second disjunct then satisfies the inequality with~$A$ exactly when $U(\Delta)\geq A$ does, while the first disjunct is redundant by \eqref{eq:pointwise-monotonicity}.

    \emph{Range and literal contributions.}
    Suppose now that $\Gamma$ contains only literal and modal formulae.
    If $A > b$ for some bound $b$, the corresponding judgement in $\Gamma-A$
    is valid, and its factor $\clip{b}$ in $C_\Gamma$ makes
    $U(\Gamma)\geq A$ false, as required.
    We may therefore assume that all bounds are at least~$A$.
    Judgements with $b-A\geq 1$ are impossible and are therefore
    deleted by the range rule in the original \interpolationCalculus
    on~$\Gamma-A$. Our construction retains them, so we must check
    that their contributions impose no additional condition on
    $U(\Gamma)\geq A$. Their literal and modal contributions are automatically at least $\clip{b-1}\geq A$ because of their outer shifts, so they impose
    no additional condition on $U(\Gamma)\geq A$.
    For a pair of opposing $p$-literals, the corresponding factor
    in $C_\Gamma$ is $\clip{(b+d-1)/2}$. If one of their shifted judgements is impossible, i.e. $b-A \geq 1$, then $b \geq A+1$. Since the other bound
    satisfies $d\geq A$ (by the assumption that all bounds are at least~$A$), we obtain
    \begin{equation*}
    	\frac{b+d-1}{2} \geq \frac{(A+1)+A-1}{2} = A.
    \end{equation*}
    Clipping preserves this inequality because $A \leq 1$, so this factor likewise imposes no additional condition on $U(\Gamma)\geq A$.
    Impossible diamonds occur at the end of the ordered list;
    a box prefix containing one has $(b_i+d_j-1)/2\geq A$ by the same argument as for opposing $p$-literals and therefore imposes no condition. Hence these judgements have no effect on the inequality, and the remaining bounds in $\Gamma-A$ lie in $[0,1)$.

    For opposing $p$-literals, the atomic axiom applies exactly when
    $1>(b-A)+(d-A)$, or equivalently $A>(b+d-1)/2$.
    The corresponding factor in $C_\Gamma$ makes the output
    inequality false exactly in this case.
    For every retained literal~$\ell$, its contribution satisfies
    \begin{equation*}
        \shiftformula{b-1}{\lnot\ell}\geq A
        \quad\equiv\quad \ell\leq b-A.
    \end{equation*}
    Thus $L_\Gamma\geq A$ is exactly the conjunction of the retained literal complements. If these literals form an atomic axiom, this conjunction is false. Otherwise all factors of $C_\Gamma$ satisfy their inequalities with~$A$.

    \emph{Modal contributions.}
    We may now assume that no range or atomic axiom applies.
    A diamond~$j$ is eligible for a box~$i$ precisely when
    \begin{equation*}
        1-(b_i-A) > d_j-A
        \quad\Longleftrightarrow\quad A>\frac{b_i+d_j-1}{2}.
    \end{equation*}
    Thus the floor in the box contribution activates exactly the
    eligible prefixes: whenever the prefix is not eligible, the
    corresponding conjunct automatically satisfies the inequality
    with~$A$. Otherwise, the floor is below~$A$, so this inequality
    reduces to $\boxsequentsresnormalformsymbolic{\Gamma}{i,j}\geq A$. 
    The empty prefix $j=0$ is also active: $d_0=0$ and $b_i-A < 1$ imply $(b_i-1)/2 < A$ and as such its modal contribution must itself reach $A$.
    
    If $j_*$ is the largest eligible index, or zero when none exists,
    \eqref{eq:pointwise-monotonicity} and monotonicity of shifts and
    modalities give
    \begin{equation*}
        \models\boxsequentsresnormalformsymbolic{\Gamma}{i,j_*}
        \leq\boxsequentsresnormalformsymbolic{\Gamma}{i,j}
        \qquad(0\leq j\leq j_*).
    \end{equation*}
    Hence only the maximal eligible prefix contributes a condition,
    exactly as in the original \interpolationCalculus.

    For each surviving modal contribution, let~$\Delta$ be its
    recursive premise and $H\geq 0.5$ the original calculus's output
    on $\Delta-A$. By induction, $U(\Delta)\geq A\equiv H\geq 0.5$.
    Put $s=1-b+A\in(0,1]$, where $b$ is the corresponding box or
    diamond bound. Since $A-s=b-1$,
    \Cref{lemma:pointwise-modal-adjustment} gives
    \begin{equation*}
        \shiftformula{b-1}{\mathsf M\shiftformula{1-b}{U(\Delta)}}\geq A
        \quad\equiv\quad
        \shiftformula{0.5-s}{\mathsf M\shiftformula{s-0.5}{H}}\geq0.5.
    \end{equation*}
    The right-hand side is precisely the original modal adjustment,
    with $\mathsf{M}=\Diamond$ for a box and $\mathsf{M}=\Box$ for a
    diamond. Finally, conjunction combines these judgements and
    the literal judgements in both constructions. This proves
    \eqref{eq:invariantproperty} simultaneously for all positive~$A$.
\end{proof}

\end{document}